\pdfoutput=1
\newif\ifFull
\Fullfalse

\documentclass{article}
\usepackage{amsthm,amsmath,amsfonts}
\usepackage{hyperref,graphicx,subcaption}
\usepackage[noend]{algorithmic}
\usepackage{cite}
\usepackage{url}

\graphicspath{{figures/}{data/}}

\newtheorem{lemma}{Lemma}

\newtheorem{theorem}{Theorem}

\begin{document}

\title{Parallel Equivalence Class Sorting: 
Algorithms, Lower Bounds, and Distribution-Based Analysis}

\author{
William E. Devanny \textsuperscript{1}\\[5pt]
Dept. of Computer Science \\
Univ. of California, Irvine \\
Irvine, CA 92697 USA \\
\texttt{wdevanny@uci.edu}
\and
Michael T.~Goodrich \\[5pt]
Dept. of Computer Science \\
Univ. of California, Irvine \\
Irvine, CA 92697 USA \\
\texttt{goodrich@uci.edu}
\and
Kristopher Jetviroj \\[5pt]
Dept. of Computer Science \\
Univ. of California, Irvine \\
Irvine, CA 92697 USA \\
\texttt{kjetviro@uci.edu}
}

\date{}

\maketitle

\begin{abstract}
We study parallel comparison-based algorithms for finding all 
equivalence classes of a set of $n$ elements, where sorting according
to some total order is not possible.
Such scenarios arise, for example, in applications, such as 
in distributed computer security, where each of $n$ agents are working to 
identify the private group to which they belong, with the only operation
available to them being a zero-knowledge pairwise-comparison (which is sometimes
called a ``secret handshake'') that reveals only whether
two agents are in the same group or in different groups.
We provide new parallel algorithms for this problem, as well as new
lower bounds and distribution-based analysis.
\end{abstract}

\footnotetext[1]{William E. Devanny was supported by an NSF Graduate Research Fellowship under grant DGE-1321846.}

\section{Introduction}

In the \emph{Equivalence Class Sorting} problem, we are given
a set, $S$, of $n$ elements and an equivalence relation, and we are asked
to group the elements of the set into their equivalence classes by only 
making pairwise equivalence tests
(e.g., see~\cite{Jayapaul2015}).
For example, 
imagine a convention of $n$ political interns where each
person at the convention belongs to one of $k$ political parties, 
such as Republican, Democrat, Green, Labor, Libertarian, etc., 
but no intern wants to openly
express his or her party affiliation unless they know they are 
talking with someone of their same party.  
Suppose further that each party has a secret handshake that
two people can perform that allows them to 
determine whether they are in the same political party (or 
they belong to different unidentified parties).
We are interested in this paper in the computational complexity of 
the equivalence class sorting problem in distributed and parallel settings,
where we would like to minimize the total number of parallel comparison
rounds and/or the total number of comparisons needed in order to classify
every element in $S$.

An important property of the equivalence class sorting problem is that it is
not possible to order the elements in $S$ according to some total ordering
that is consistent with the equivalence classes. Such a restriction could
come from a general lack of such an ordering or from security or privacy
concerns.  For example, consider the following applications:
\begin{itemize}
\item
\emph{Generalized fault diagnosis}. Suppose that each of $n$ different
computers are in one of $k$ distinct malware states, depending on whether
they have been infected with various computer worms. 
Each worm does not wish to reveal its presence, but it
nevertheless has an ability to detect when another computer 
is already infected with it (or risk autodetection by an exponential
cascade, as occurred with the Morris worm~\cite{Morris}).
But a worm on one computer 
is unlikely to be able to detect a different kind of worm
on another computer.
Thus, two computers can only compare each
other to determine if they have exactly the same kinds of infections or not. 
The generalized fault diagnosis problem, therefore, is to have the $n$
computers classify themselves into $k$ malware groups depending on their
infections, where the only testing method available is for two computers to
perform a pairwise comparison that tells 
them that they are either in the same malware
state or they are in different states.
This is a generalization of the classic fault diagnosis problem, where
there are only two states, ``faulty'' or ``good,'' which is studied in
a number of interesting papers, including one from the very first
SPAA conference (e.g., see 
\cite{Beigel:1989,Beigel:1993,b492587,Goodrich2008199,PU:46643,p4039201}).
\item
\emph{Group classification via secret handshakes}.
This is a cryptographic analogue to
the motivating example given above of interns at a political convention. 
In this case, $n$ agents are each assigned to one of $k$ 
groups, such that any two agents can perform a cryptographic 
``secret handshake'' protocol that results in them learning only whether
they belong to the same group or not
(e.g., see~\cite{Castelluccia2004,Jarecki2007,Sorniotti2010619,Xu:2004}).
The problem is to perform an efficient number of pairwise secret-handshake
tests in a few parallel rounds so that each agent identifies itself with 
the others of its group.
\item
\emph{Graph mining}.
Graph mining is the study of structure in collections of 
graphs~\cite{Cook:2006}. 
One of the algorithmic problems in this 
area is to classify which of a collection of
$n$ graphs 
are isomorphic to one another (e.g., see~\cite{Parthasarathy2010}).
That is, testing if two graphs are in the same group involves performing
a graph isomorphism comparison of the two graphs, which is a computation that
tends to be nontrivial but is nevertheless computationally feasible in some
contexts (e.g., see~\cite{graphs}).
\end{itemize}
Note that each of these applications contains two 
important features that
form the essence of the equivalence class sorting problem:
\begin{enumerate}
\item
In each application,
it is not possible to sort elements according to a known total order,
either because no such total order exists or because it would break 
a security/privacy condition to provide such a total order.
\item
The equivalence or nonequivalence between two 
elements can be determined only through pairwise comparisons. 
\end{enumerate}

There are nevertheless some interesting differences between
these applications, as well, which motivate
our study of two different versions of the equivalence class sorting problem.
Namely, in the first two applications, the comparisons done in any given
round in an algorithm must be disjoint, since the elements themselves
are performing the comparisons. In the latter two
applications, however, the elements are the objects of 
the comparisons, and
we could, in principle, allow for comparisons involving multiple 
copies of the same element in each round.
For this reason, we allow for two versions of the equivalence class sorting
problem:
\begin{itemize}
\item
\emph{Exclusive-Read (ER)} version. In this version, each element in $S$ can
be involved in at most a single comparison of itself and another
element in $S$ in any given comparison round.
\item
\emph{Concurrent-Read (CR)} version. In this version, each element in $S$ can
be involved in multiple comparisons of itself and other elements in $S$
in any comparison round.
\end{itemize}
In either version, we are interested in minimizing the number of parallel
comparison rounds and/or the total number of comparisons needed
to classify every element of $S$ into its group.

Because we expect the 
number parallel comparison rounds and the total number of comparisons
to be the main performance bottlenecks,
we are interested here in studying the equivalence class sorting problem in
Valiant's parallel comparison model~\cite{Valiant}, 
which only counts steps in which 
comparisons are made.
This is a synchronous computation
model that does not count any steps done between comparison steps, 
for example, to aggregate groups of equivalent elements based on 
comparisons done in previous steps.

\subsection{Related Prior Work}
In addition to the references cited above that motivate 
the equivalence class sorting problem or study the 
special case when the number of groups, $k$, is two,
Jayapaul {\it et al.}~\cite{Jayapaul2015} study the general 
equivalence class sorting problem,
albeit strictly from a sequential perspective.
For example, they show that one can solve the equivalence class sorting
problem using $O(n^2/\ell)$ comparisons, where $\ell$ is the size of the smallest 
equivalence class.
They also show that this problem has a lower bound of $\Omega(n^2/\ell^2)$ even
if the value of $\ell$ is known in advance.

The equivalence class sorting problem is, of course, 
related to comparison-based
algorithms for computing the majority or mode of a set of elements,
for which there is an extensive set of prior research
(e.g., see~\cite{Alonso1993253,Alonso2013495,DOBKIN1980255,ref11}).
None of these algorithms for majority or mode result in efficient parallel
algorithms for the equivalence class sorting problem, however.

\subsection{Our Results}
In this paper, we study the equivalence class sorting (ECS) 
problem from a parallel perspective, providing a 
number of new results, including the following:
\begin{enumerate}
\item
The CR version of the ECS problem can be solved in $O(k + \log\log n)$ 
parallel rounds using $n$ processors, were  $k$ is the number of equivalence classes.
\item
The ER version of the ECS problem can be solved in $O(k\log n)$ 
parallel rounds using $n$ processors, were  $k$ is the number of equivalence classes.
\item
The ER version of the ECS problem can be solved in $O(1)$ 
parallel rounds using $n$ processors, for the case when $\ell$ is at least $\lambda n$, for
a fixed constant $0<\lambda\le 0.4$,
where $\ell$ is the size of the smallest equivalence class.
\item
If every equivalence class is of size $f$, then solving the ECS problem
requires $\Omega(n^2/f)$ total comparisons. 
This improves a lower bound of $\Omega(n^2/f^2)$ by
Jayapaul {\it et al.}~\cite{Jayapaul2015}.
\item
Solving the ECS problem requires $\Omega(n^2/\ell)$ total comparisons,
where $\ell$ is the size of the smallest equivalence class.
This improves a lower bound of $\Omega(n^2/\ell^2)$ by
Jayapaul {\it et al.}~\cite{Jayapaul2015}.
\item
In Section~\ref{sec:sort-dists},
we study how to efficiently solve
the ECS problem when the input is drawn from a
known distribution on equivalence classes.  In this setting, we assume
$n$ elements have been sampled and fed as input to the algorithm.
We establish a relationship between the mean of the distribution
and the algorithm's total number of comparisons, 
obtaining upper bounds with high
probability for a variety of interesting distributions.  
\item
We provide the results of several
experiments to validate the
results from Section~\ref{sec:sort-dists} and study how total comparison
counts change as parameters of the distributions change.
\end{enumerate}
Our methods are based on several novel techniques, including 
a two-phased compounding-comparison technique for the parallel upper bounds and
the use of a new coloring argument for the lower bounds.


\section{Parallel Algorithms} \label{sec:para-alg}
In this section, we provide efficient parallel algorithms for solving
the equivalence class sorting (ECS) problem in Valiant's parallel
model of computation~\cite{Valiant}.
We focus on both the exclusive-read (ER) and concurrent-read (CR) versions 
of the problem, and
we assume we have $n$ processors, each of which
can be assigned to one equivalence comparison test to perform in a given
parallel round.  
Note, therefore, that any lower bound, $T(n)$, on the 
total number of comparisons needed to solve the ECS problem (e.g., as
given by Jayapaul {\it et al.}~\cite{Jayapaul2015}
and as we discuss in Section~\ref{sec:lower-bounds}), immediately implies
a lower bound of $\Omega(T(n)/n)$ for the number of parallel rounds
of computation using $n$ processors per round.
For instance,
these lower bounds imply that
the number of parallel rounds for solving the ECS problem with $n$ processors
must be $\Omega(n/\ell)$ and $\Omega(k)$, respectively,
where $k$ is the number of equivalence classes and $\ell$ is the size of the
smallest equivalence class.

With respect to upper bounds, recall that
Jayapaul {\it et al.}~\cite{Jayapaul2015}
studied the ECS problem from a sequential perspective.
Unfortunately, their algorithm cannot 
be easily parallelized, because the comparisons performed in a ``round'' of
their algorithm depend on the results from other comparisons in that same
round.
Thus, new parallel ECS algorithms are needed.

\subsection{Algorithms Based on the Number of Groups}
In this subsection, we describe CR and ER algorithms based on knowledge
of the number of groups, $k$.

If two sets of elements are sorted into their equivalence classes,
merging the two answers into the answer for the union requires at
most $k^2$ equivalence tests by simply performing a comparison
between every pair of equivalence class one from the first answer
and one from the second.  This idea leads to the following 
algorithm, which uses a two-phased compounding-comparison technique to 
solve the ECS problem:

\begin{enumerate}
\item Initialize a list 
of $n$ answers containing the individual input elements.
\item While the number of processors per answer is less than $4k^2$, 
merge pairs of answers by performing $k^2$ tests.
\item While there is more than one answer, let $ck^2$ be the number of processors available per answer and merge $c$ answers together by performing at most ${c \choose 2} k^2$ tests between each of the answers.
\end{enumerate} 

We analyze this algorithm in the following two lemmas
and we illustrate it in Figure~\ref{fig:algo-phases}.

\begin{figure*}[tbp]
\centering
\includegraphics[scale=0.8]{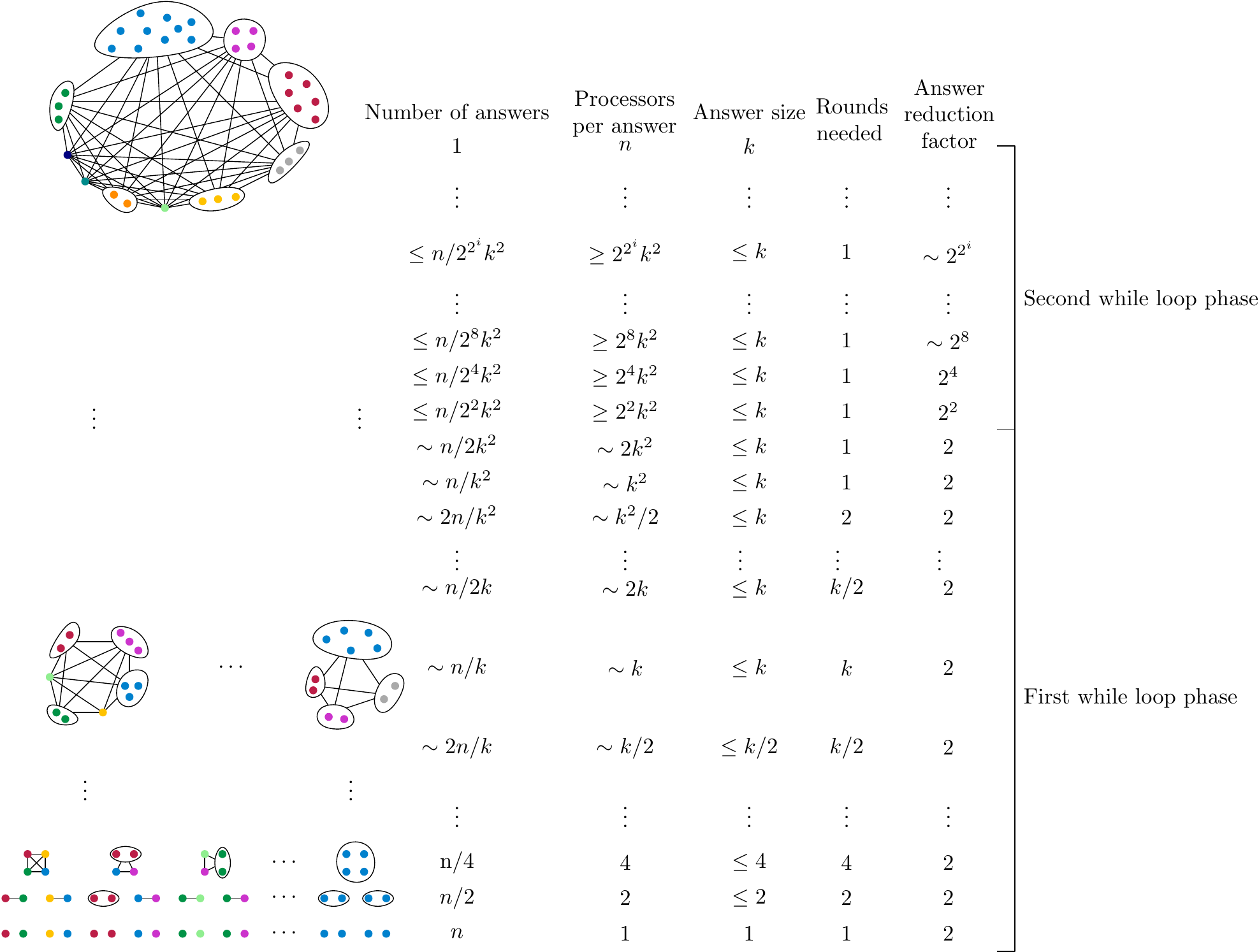}
\caption{A visualization of the parallel algorithm with a table on the right keeping track of relevant numbers for each loop iteration.}
\label{fig:algo-phases}
\end{figure*}

\begin{lemma}\label{lem:first-while}
The first while loop takes $O(k)$ rounds to complete.
\end{lemma}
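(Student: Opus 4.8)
The plan is to track three quantities through the iterations of the first while loop — the number of answers, the number of processors allotted to each answer, and (crucially) the number of equivalence classes an answer can contain — and then bound the rounds spent in each iteration and sum.

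First I would set up the per-iteration bookkeeping. Since each iteration merges answers in pairs while the processor pool stays fixed at $n$, after $i$ iterations there are $n/2^i$ answers, each assembled from $2^i$ input elements, and each answer is allotted $2^i$ processors. The loop condition ``processors per answer $<4k^2$'' therefore fails once $2^i \ge 4k^2$, so the loop runs for at most $\lceil \log_2(4k^2)\rceil = O(\log k)$ iterations. The key observation — the step separating $O(k)$ from the naive $O(k^2)$ — is that an answer built from only $2^i$ elements can contain at most $\min(2^i,k)$ equivalence classes, so merging two answers in iteration $i$ costs at most $\min(2^i,k)^2$ comparisons rather than the worst-case $k^2$. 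Distributing the $n$ processors evenly over the $n/2^{i+1}$ merges, iteration $i$ consumes
\[
r_i \;=\; \left\lceil \frac{(n/2^{i+1})\cdot \min(2^i,k)^2}{n} \right\rceil \;=\; \left\lceil \frac{\min(2^{2i},k^2)}{2^{i+1}} \right\rceil
\]
rounds.

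Then I would bound $\sum_i r_i$ by splitting the sum at $i=\log_2 k$. For $i\le \log_2 k$ the numerator is $2^{2i}$, giving $r_i \le 2^{i-1}+1$, a geometric series dominated by its last term $2^{\log_2 k-1}=k/2$, hence summing to $O(k)$ (with the $+1$ terms contributing $O(\log k)$). For $i>\log_2 k$ the numerator is $k^2$, giving $r_i\le k^2/2^{i+1}+1$, a decreasing geometric series whose leading term near $2^i\approx k$ is $O(k)$, so it too sums to $O(k)$. In other words, $r_i$ rises geometrically to a peak of $\Theta(k)$ around $2^i\approx k$ and then decays geometrically, so the whole ``tent'' sums to $O(k)$; the $O(\log k)$ ceiling contributions (one per iteration) are absorbed.

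The main obstacle is exactly this class-counting refinement: the pseudocode's ``$k^2$ tests'' is only an upper bound on a single merge, and charging $k^2$ to \emph{every} merge yields $\Theta(k^2)$ rounds, missing the claim. The heart of the argument is recognizing that early merges are cheap precisely because their answers are small — an answer on $2^i$ elements spans at most $\min(2^i,k)$ classes — and then verifying that the resulting per-iteration round counts form two geometric series peaking at $\Theta(k)$, which is where I would be most careful to keep the regime split and the ceilings straight.
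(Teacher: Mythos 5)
Your proposal is correct and follows essentially the same route as the paper's proof: bound the number of classes per answer by $\min(2^i,k)$, observe the per-iteration round cost rises geometrically to $\Theta(k)$ near $2^i\approx k$ and then decays geometrically until the loop exits at $2^i\ge 4k^2$, and sum the two geometric series. Your write-up is in fact more explicit than the paper's (which states the same split at $i=\lceil\log k\rceil$ and the same geometric sums more tersely), so no gap to report.
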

\begin{proof}
In each round the number of equivalence classes in an answer at
most doubles until it reaches the upper bound of $k$.  In loop
iteration $i \leq \lceil \log k \rceil$, the answers are size at
most $2^i$ and there are $2^i$ processors per answer.  Therefore
it takes at most $2^i$ rounds to merge two answers.  The number of
rounds to reach the $\lceil \log k \rceil$ loop iteration is $O(k)$.
For loop iterations $\lceil \log k \rceil < i < \lceil \log k
\rceil^2$, the answers are size at most $k$, but there are still
at most $2^i$ processors per answer.  The number of rounds needed
for these iterations is also $O(k)$, as it forms a geometric sum that
adds up to be $O(k)$.
This part of the algorithm is illustrated in the bottom half of
Figure~\ref{fig:algo-phases}.
\end{proof}

\begin{lemma}\label{lem:second-while}
The second while loop takes $O(\log \log n)$ rounds to complete.
\end{lemma}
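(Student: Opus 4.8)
The plan is to track two coupled quantities across the iterations of the second while loop: the number $m_j$ of remaining answers and the number $p_j$ of processors allocated per answer, maintaining throughout the budget invariant $m_j\, p_j = n$. Writing $p_j = c_j k^2$ as in the algorithm statement, the first thing I would record is the state at entry. By Lemma~\ref{lem:first-while} the first loop halts only once each answer owns at least $4k^2$ processors, so the second loop starts with $c_0 \ge 4$ and hence at most $m_0 \le n/(4k^2)$ answers; moreover each answer is by then a complete solution on its block of elements, so it holds at most $k$ equivalence classes, which is exactly what makes any pairwise answer-merge cost at most $k^2$ tests.

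Next I would prove the central claim that each iteration costs only a single comparison round. In the iteration with parameter $c = c_j$, we merge a group of $c_j$ answers; pooling their processors gives $c_j\cdot(c_j k^2) = c_j^2 k^2$ available processors, while merging the group requires comparing each of the $\binom{c_j}{2}$ pairs of answers at a cost of $k^2$ tests per pair, i.e.\ at most $\binom{c_j}{2}k^2 \le \tfrac12 c_j^2 k^2$ tests in total. Since the pooled processor count $c_j^2 k^2$ exceeds this work, all comparisons in the group fit into one round. From this I would read off the recurrence: merging groups of $c_j$ answers divides the count of answers by $c_j$, so $m_{j+1} = m_j/c_j$, and it multiplies the per-answer processor count by $c_j$, so $p_{j+1} = c_j p_j$, giving $c_{j+1} = c_j^2$. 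Starting from $c_0 \ge 4$ this yields the doubly-exponential growth $c_j = c_0^{2^j} \ge 4^{2^j}$, which is the engine of the speedup.

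Finally I would bound the iteration count. The loop terminates once $m_j = 1$, equivalently once $p_j = c_j k^2 = n$, i.e.\ $c_j = n/k^2$. Solving $c_0^{2^j} \ge n/k^2$ gives $2^j \ge \log(n/k^2)/\log c_0$, and because $c_0 \ge 4$ the right-hand side is at most $\tfrac12\log n$; hence the loop halts after $j = O(\log\log n)$ iterations, and multiplying by the one-round-per-iteration bound gives the claimed $O(\log\log n)$ rounds. The step I expect to be the main obstacle is making the single-round claim airtight: I must verify that the pooled pool $c_j^2 k^2$ can genuinely schedule all $\binom{c_j}{2}k^2$ tests concurrently (this is where the CR assumption and the $4k^2$ threshold from the first loop are used), and I must handle the boundary cases where $c_j$ does not divide $m_j$ evenly or where $c_j^2 k^2$ overshoots $n$ in the last iteration, arguing that such rounding only affects the constant in the round count.
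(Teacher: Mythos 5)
Your proof is correct and follows essentially the same route as the paper: both arguments run the second phase by pooling the $c k^2$ processors of a group of answers to absorb each group-merge into a single round, derive the doubly-exponential growth of the per-answer processor budget ($c \mapsto c^2$ in your version), and conclude $O(\log\log n)$ iterations. If anything, your accounting is the more careful one — merging groups of $c$ answers (as in the algorithm's own description) genuinely fits the $\binom{c}{2}k^2$ tests into the pooled $c^2k^2$ processors for every iteration, whereas the paper's prose claims groups of $\binom{c}{2}$ answers merge in one round, which under the same processor count only checks out at the entry value $c=4$; the discrepancy does not affect the $O(\log\log n)$ conclusion.
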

\begin{proof}
When entering the second while, there are more processors per answer
than needed to merge just two answers at a time.  If an answer has
access to $ck^2$ processors, then a group of ${c \choose 2}$ answers
can merge into one answer in a single round.  This means that if
there are $n/(ck^2)$ answers at the start of a round, then we merge
groups of $c^2/2$ answers into one answer and there are $n/(c^3k/2)$
answers remaining.  Because $c\geq 4$ by the condition of the first
while loop, in the iteration $i$ of the second while loop, there
are at most $n/(2^{2^i}k)$ answers.  And so the second while loop
will terminate after $O(\log \log n)$ rounds with the single answer
for the entire input.
This is illustrated in the top half of Figure~\ref{fig:algo-phases}.
\end{proof}

Combining these two lemmas, we get the following.

\begin{theorem}
The CR version of the equivalence class sorting problem on $n$ elements and $k$ equivalence classes can be solved in $O(k + \log \log n)$ parallel rounds of equivalence tests,
using $n$ processors in Valiant's parallel comparison model.
\end{theorem}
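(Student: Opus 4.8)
The plan is to combine the two lemmas with a correctness argument for the merging procedure, since the theorem's round bound is simply the sum of the two phase bounds. First I would establish correctness. Each \emph{answer} maintained by the algorithm is a correct partition of its elements into equivalence classes, represented by one designated element per class. When two such answers are merged, it suffices to compare a representative of each class in the first answer against a representative of each class in the second; since each answer has at most $k$ classes, this is at most $k^2$ tests, and the outcomes determine exactly which classes coincide across the two answers. Merging $c$ answers simultaneously amounts to running this pairwise-representative test on each of the ${c \choose 2}$ pairs, for at most ${c \choose 2}k^2$ tests, after which a union-find style relabeling (carried out in the free, non-comparison steps of Valiant's model) yields the merged answer. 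By induction over the merge steps, the final single answer correctly sorts all of $S$.

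Next I would bound the number of comparison rounds. The initialization in Step~1 performs no comparisons and so costs zero rounds. Step~2 is exactly the first while loop, which by Lemma~\ref{lem:first-while} completes in $O(k)$ rounds, and Step~3 is exactly the second while loop, which by Lemma~\ref{lem:second-while} completes in $O(\log\log n)$ rounds. Because the two loops run one after the other, the total round count is $O(k) + O(\log\log n) = O(k + \log\log n)$, as claimed.

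The remaining obligation, and the step I expect to require the most care, is to verify that the algorithm never requests more than $n$ processors, since the claim is about an $n$-processor machine. Here I would track the invariant that if there are $m$ answers at the start of a round, each is allocated an equal $n/m$ share of the processors: in the first loop this is $2^i$ processors per answer with $n/2^i$ answers, and in the second loop it is $ck^2$ processors per answer with $n/(ck^2)$ answers, so in either case the product is exactly $n$ and the per-answer comparison load (at most $k^2$, or ${c \choose 2}k^2$, spread across the allotted processors) stays within budget. Finally, I would confirm that the first loop terminates precisely when the per-answer allocation reaches $4k^2$, which is exactly the entry condition assumed by Lemma~\ref{lem:second-while}; this closes the seam between the two phases and completes the proof.
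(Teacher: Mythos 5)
Your proposal is correct and follows the same route as the paper, whose proof of this theorem is simply the citation of Lemmas~\ref{lem:first-while} and~\ref{lem:second-while}; you have filled in the correctness and processor-accounting details that the paper leaves implicit. Nothing in your argument diverges from the paper's intended decomposition.
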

\begin{proof}
Lemmas~\ref{lem:first-while}~and~\ref{lem:second-while}.
\end{proof}

We also have the following.

\begin{theorem}
The ER version of the equivalence class sorting problem on $n$ elements and $k$ equivalence classes can be solved in $O(k\log n)$ parallel rounds of equivalence tests,
using $n$ processors in Valiant's parallel comparison model.
\end{theorem}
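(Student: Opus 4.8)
The plan is to build the final answer bottom-up along a balanced binary merge tree with $\lceil \log n \rceil$ levels, reusing the merging primitive that underlies the CR algorithm but scheduling its comparisons so as to respect the exclusive-read constraint. Starting from the $n$ singleton answers, at each level I would pair up the current answers and merge each pair; after $\lceil \log n \rceil$ levels a single answer remains. As already observed in the CR case, merging two answers that each contain at most $c$ equivalence classes can be accomplished by choosing one representative per class on each side and comparing the two sets of representatives in all $c^2$ pairs; the outcomes determine which classes coincide, and the actual regrouping of the elements is free in Valiant's model since it involves no comparisons.

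The hard part will be carrying out these $c^2$ comparisons within the ER restriction, where every element---in particular every representative---may take part in at most one comparison per round. I would model the required comparisons as the edge set of a complete bipartite graph $K_{c,c}$ whose two sides are the representatives of the two answers. Because $K_{c,c}$ is $c$-regular and bipartite, K\"onig's edge-coloring theorem decomposes it into exactly $c$ perfect matchings. Performing the comparisons of one matching per round makes each representative appear in a single comparison per round, which is precisely the ER condition, so a single merge of two $c$-class answers finishes in $c$ ER rounds. A second point to check is that the merges at one level can run concurrently: distinct pairs of answers use disjoint elements, hence disjoint representatives, so their matchings never conflict and an entire level costs only as many rounds as one merge.

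It then remains to sum the cost over the levels and to confirm the processor budget. At level $i$ (indexing from $0$) an answer holds at most $2^i$ elements and therefore at most $\min(2^i,k)$ classes, so a merge there takes $\min(2^i,k)$ rounds. Summing gives $\sum_{i=0}^{\lceil\log n\rceil}\min(2^i,k)=O(k)+O\!\left(k\log(n/k)\right)=O(k\log n)$, where the first term collects the levels with $2^i\le k$ as a geometric series and the second collects the remaining $O(\log(n/k))$ levels that each cost $k$. For the processor count, level $i$ contains $O(n/2^i)$ merges, each using a matching of size at most $\min(2^i,k)\le 2^i$ in any round, so at most $O(n)$ comparisons occur per round and the $n$ processors suffice throughout.

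I expect the edge-coloring step to be the main conceptual point: it is exactly the exclusivity requirement that prevents us from collapsing the $c^2$ comparisons of a merge into a single round, and it also blocks the compounding trick used in the CR algorithm, since that trick relies on a representative participating in many comparisons at once. Consequently each of the $\Theta(\log n)$ merge levels must pay its full $\Theta(k)$ rounds, which is precisely what produces the extra logarithmic factor relative to the $O(k+\log\log n)$ bound of the CR version.
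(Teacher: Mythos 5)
Your proposal is correct and follows essentially the same route as the paper: a balanced binary merge tree of depth $O(\log n)$ in which each pairwise merge of answers costs at most $k$ ER rounds. The paper simply asserts that a merge ``will always take at most $k$ rounds'' and that $\log n$ merge iterations suffice; you supply the details it leaves implicit (the K\"onig edge-coloring of $K_{c,c}$ into matchings to respect exclusivity, the per-level concurrency and processor count, and the refined sum $\sum_i \min(2^i,k)$), all of which check out.
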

\begin{proof}
Merging two answers for the ER version of the ECS problem
model will always take at most $k$ rounds.  Repeatedly merging answers will arrive at one answer in $\log n$ iterations.  So equivalence class sorting can be done in $O(k\log n)$ parallel rounds of equivalence tests.
\end{proof}

\subsection{Algorithms Based on the Smallest Group Size}
In this subsection, we describe ER algorithms based on knowledge
of $\ell$, the size of the smallest equivalence class. We assume in this
section that $\ell\ge \lambda n$, for some constant $\lambda>0$, and we
show how to solve the ECS problem in this scenario using $O(1)$ 
parallel comparison rounds.
Our methods are generalizations of previous methods for 
the parallel fault diagnosis problem when there are only two classes, ``good'' 
and ``faulty''~\cite{Beigel:1989,Beigel:1993,b492587,Goodrich2008199}.
Let us assume, therefore, that there are at least 3 equivalence classes.

We begin with a theorem from Goodrich~\cite{Goodrich2008199}.


\begin{theorem}[Goodrich~\cite{Goodrich2008199}]
\label{thm-fault}
Let $V$ be a set of $n$ vertices, and let $0<\gamma,\lambda<1$.
Let $H_d=(V,E)$ be a directed graph
defined by the union of $d$ independent
randomly-chosen\footnote{That is, $H_d$ is defined by the
  union of cycles determined by $d$ random
  permutations of the $n$ vertices in $V$, so $H_d$ is, by definition,
  a simple directed graph.}
Hamiltonian cycles on $V$
(with all such cycles equally likely).
Then, for all subsets $W$ of $V$ of $\lambda n$ vertices,
$H_d$ induces at least one strongly connected component on $W$ of
size greater than $\gamma\lambda n$,
with probability at least
\[
1 - e^{n[(1+\lambda) \ln 2 + d(\alpha \ln \alpha + \beta \ln \beta
                              - (1-\lambda) \ln (1-\lambda))] + O(1)} ,
\]
where $\alpha=1-\frac{1-\gamma}{2} \lambda$
and $\beta=1-\frac{1+\gamma}{2} \lambda$.
\end{theorem}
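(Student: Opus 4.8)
The plan is to bound the probability of the complementary ``bad'' event---that some $W\subseteq V$ with $|W|=\lambda n$ induces only small strongly connected components, all of size at most $\gamma\lambda n$---by a union bound over a carefully chosen family of witnesses. The first step is a structural lemma: \emph{if $H_d[W]$ has no strongly connected component of size exceeding $\gamma\lambda n$, then $W$ admits a partition $W=A\cup B$ with no edge of $H_d$ directed from $B$ to $A$, where $|A|$ and $|B|$ both lie in $[\frac{1-\gamma}{2}\lambda n,\frac{1+\gamma}{2}\lambda n]$.} To see this, contract each SCC of $H_d[W]$ to a point; the condensation is a DAG, so it has a topological order $C_1,\dots,C_m$ in which every inter-component edge runs forward. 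Summing component sizes along this order and using that each jump is at most $\gamma\lambda n$, I stop at the first prefix whose total size reaches $\frac{1-\gamma}{2}\lambda n$; this prefix $A$ then has size in $[\frac{1-\gamma}{2}\lambda n,\frac{1+\gamma}{2}\lambda n)$, its complement $B=W\setminus A$ is of comparable size, and no edge runs from $B$ back to $A$ (such an edge would be a backward inter-component edge, and each SCC lies wholly in $A$ or wholly in $B$). Thus the bad event implies the existence of such a ``one-way'' cut.

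Next I would set up the union bound. Writing $a=|A|$, $b=|B|$, and $r=n-\lambda n$ for the vertices of $V$ outside $W$, the number of ways to choose $W$ and then split it as $A\cup B$ is $\binom{n}{\lambda n}\binom{\lambda n}{a}\le 2^n\cdot 2^{\lambda n}=2^{(1+\lambda)n}$, which accounts for the $(1+\lambda)\ln 2$ term in the exponent (the $O(n)$ choices for the exact value of $a$ are absorbed into lower-order terms). It then remains to bound, for a fixed ``one-way'' pair $(A,B)$, the probability $p^d$ that none of the $d$ independent Hamiltonian cycles contains an edge directed from $B$ to $A$, where $p$ is the single-cycle probability.

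The crux is to compute $p$ exactly and evaluate it asymptotically. I would count directed Hamiltonian cycles on $V$ in which no $B$-vertex is immediately followed by an $A$-vertex: every maximal run of $A$-vertices must be preceded by a vertex of $R=V\setminus W$, so such cycles are in bijection with (i) a cyclic arrangement of the $b+r$ vertices of $B\cup R$ and (ii) a distribution of the $a$ labeled $A$-vertices, as ordered runs, into the $r$ gaps immediately following the $R$-vertices. This yields the closed form
\[
p=\frac{(b+r-1)!\,(a+r-1)!}{(r-1)!\,(n-1)!}.
\]
Applying Stirling's approximation and substituting $b+r=\alpha n$, $a+r=\beta n$, and $r=(1-\lambda)n$, the linear and $n\ln n$ terms cancel---this relies on the identity $\alpha+\beta-(1-\lambda)=1$---leaving $\ln p=n\bigl(\alpha\ln\alpha+\beta\ln\beta-(1-\lambda)\ln(1-\lambda)\bigr)+O(\log n)$. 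Multiplying the union-bound count by $p^d$ and taking logarithms produces the stated exponent.

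I expect the main obstacle to be the exact enumeration of the ``one-way'' cycles together with the verification that the Stirling expansion collapses precisely to the entropy expression; the cancellation of the $n\ln n$ terms is delicate and hinges on the specific split sizes. Because $p$ (for fixed $a+b=\lambda n$) is maximized at the most unbalanced admissible split---the two endpoints $\{a,b\}=\{\frac{1-\gamma}{2}\lambda n,\frac{1+\gamma}{2}\lambda n\}$ being symmetric and hence worst---taking this worst case over the range guaranteed by the structural lemma is exactly what yields the particular values $\alpha=1-\frac{1-\gamma}{2}\lambda$ and $\beta=1-\frac{1+\gamma}{2}\lambda$.
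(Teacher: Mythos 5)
The paper itself supplies no proof of this theorem---it is imported verbatim from Goodrich~\cite{Goodrich2008199}---and your reconstruction follows exactly the argument of that source: a union bound over the at most $\binom{n}{\lambda n}\binom{\lambda n}{a}\le 2^{(1+\lambda)n}$ choices of $W$ and of a one-way cut $(A,B)$ extracted from a topological order of the condensation (with the SCC-size bound forcing $|A|,|B|$ into the stated window), an exact enumeration of Hamiltonian cycles avoiding $B\to A$ edges giving $p=\frac{(b+r-1)!\,(a+r-1)!}{(r-1)!\,(n-1)!}$, and Stirling plus convexity of the resulting exponent to identify the unbalanced split as the worst case, yielding the stated $\alpha$ and $\beta$. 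This is correct and essentially the same proof; the one cosmetic remark is that summing $\binom{\lambda n}{a}$ over all $a$ gives $2^{\lambda n}$ exactly, so the extra factor for the choice of $a$ need not be ``absorbed'' anywhere.
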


In the context of the present paper,
let us take $\gamma=1/4$, so
$\alpha=1-(3/8)\lambda$ and $\beta=1-(5/8)\lambda$.
Let us also assume that $\lambda\le 0.4$, since we are considering the
case when the number of equivalence classes is at least $3$; hence, the smallest
equivalence class is of size at most $n/3$.

Unfortunately, using standard approximations for the natural
logarithm is not sufficient for us to employ the above probability bound
for small values of $\lambda$. 
So instead we use the following inequalities,
which hold for $x$ in the range $[0,0.4]$ (e.g., see~\cite{Kozma}),
and are based on the Taylor series for the natural logarithm:
\[
-x - \frac{x^2}{2} - \frac{x^3}{2} \le \ln (1-x) 
                           \le -x - \frac{x^2}{2} - \frac{x^3}{4}.
\]

These bounds allow us to bound the main term, $t$, in the above probability
of Theorem~\ref{thm-fault}
(for $\gamma=1/4$) as follows:
\begin{eqnarray*}
t &=&
\alpha \ln \alpha\ +\ \beta \ln \beta 
- (1-\lambda) \ln (1-\lambda) \\
&=&
(1-\frac{3}{8}\lambda) \ln (1-\frac{3}{8}\lambda)\ +\ (1-\frac{5}{8}\lambda) 
                       \ln (1-\frac{5}{8}\lambda) \\
&& -\ (1-\lambda) \ln (1-\lambda) \\
&\le& 
(1-\frac{3}{8}\lambda) \left(-\frac{3}{8}\lambda - 
    \frac{1}{2}\left(\frac{3}{8}\lambda\right)^2
    - \frac{1}{4}\left(\frac{3}{8}\lambda\right)^3\right) \\
&&+\  (1-\frac{5}{8}\lambda) \left(-\frac{5}{8}\lambda - 
    \frac{1}{2}\left(\frac{5}{8}\lambda\right)^2
    - \frac{1}{4}\left(\frac{5}{8}\lambda\right)^3\right) \\
&&-\ (1-\lambda) \left(-\lambda - 
    \frac{\lambda^2}{2} - \frac{\lambda^3}{2}\right) \\
&\le& -\frac{3743}{8192}\lambda^4 + \frac{19}{256} \lambda^3 - 
       \frac{15}{64} \lambda^2,
\end{eqnarray*}
which, in turn, is at most
\[
-\frac{\lambda^2}{8},
\]
for $0<\lambda\le 0.4$.
Thus, since this bound is negative 
for any constant $0<\lambda\le 0.4$, we can set $d$ to be a constant
(depending on $\lambda$) so that 
Theorem~\ref{thm-fault} holds with high probability.

Our ECS algorithm, then, is as follows:
\begin{enumerate}
\item
Construct a graph, $H_d$, as in Theorem~\ref{thm-fault},
as described above, with $d$ set 
to a constant so that
the theorem holds for the fixed $\lambda$ in the range
$(0,0.4]$ that is given.
Note that this step does not require 
any comparisons; hence, we do not count the time for this step in our
analysis (and the theorem holds with high probability in any case).
\item
Note that $H_d$ is a union of $d$ Hamiltonian cycles.
Thus, let us perform all the comparisons in $H_d$ in $2d$ rounds.
Furthermore, we can do this set of comparisons
even for the ER version of the problem.
Moreover, since $d$ is $O(1)$, this step involves a constant number of parallel
rounds (of $O(n)$ comparisons per round).
\item
For each strongly connected component, $C$, in $H_d$ consisting of elements
of the same equivalence class, compare the elements in $C$ with the other
elements in $S$, taking $|C|$ at a time.
By Theorem~\ref{thm-fault}, $|C|\ge \lambda n/8$. Thus, this step can 
be performed
in $O(1/\lambda)=O(1)$ rounds for each connected component; hence it
requires $O(1)$ parallel rounds in total.
Moreover, after this step completes, we will necessarily have identified
all the members of each equivalence class.
\end{enumerate}
We summarize as follows.

\begin{theorem}
Suppose $S$ is a set of $n$ elements,
such that the smallest equivalence class in $S$ is of size at least $\lambda n$,
for a fixed constant, $\lambda$, in the range $(0,0.4]$.
Then the
ER version of the equivalence class sorting problem on $S$ can be solved
in $O(1)$ parallel rounds using $n$ processors in Valiant's parallel comparison
model.
\end{theorem}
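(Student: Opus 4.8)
The plan is to verify that the three-step algorithm described just above the statement runs in $O(1)$ parallel ER rounds and correctly classifies every element, relying on Theorem~\ref{thm-fault} (with $\gamma=1/4$) as the central tool. First I would observe that the number of equivalence classes is a constant: since every class has size at least $\lambda n$, there are at most $k \le 1/\lambda$ of them. I would also record the key consequence of the preceding calculation, namely that the main term $t$ in the probability bound of Theorem~\ref{thm-fault} satisfies $t \le -\lambda^2/8 < 0$ for $0 < \lambda \le 0.4$; this lets me fix $d$ to a constant (depending only on $\lambda$) large enough that the exponent $n[(1+\lambda)\ln 2 + d\,t] + O(1)$ tends to $-\infty$, so the conclusion of Theorem~\ref{thm-fault} holds with high probability. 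Building $H_d$ in Step~1 uses no comparisons and so contributes nothing to the round count.

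For Step~2, I would note that $H_d$ is the union of $d$ Hamiltonian cycles, and the edges of each cycle can be partitioned into a constant number of matchings (two if the cycle has even length, three otherwise), so all comparisons along $H_d$ can be carried out in $O(d)=O(1)$ ER rounds, with each element participating in at most one comparison per round. After this step, for every edge of $H_d$ we know whether its two endpoints lie in the same class, so we may consider the subgraph of ``same-class'' edges; a strongly connected component in this subgraph consists entirely of elements of a single equivalence class.

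The correctness of Step~3 is where the work concentrates, and this is the step I expect to be the main obstacle. Each equivalence class $W$ has $|W|\ge\lambda n$, so I would apply Theorem~\ref{thm-fault} to a subset $W'\subseteq W$ of exactly $\lambda n$ vertices to obtain a strongly connected component $C\subseteq W'$ with $|C| > \gamma\lambda n = \lambda n/4 \ge \lambda n/8$, all of whose members are certified to be in the same class by the same-class paths within $C$. Because the probability bound quantifies over \emph{all} subsets of size $\lambda n$ simultaneously, the same guarantee holds for every class at once with high probability, yielding one certified core $C$ per class. I would then compare the elements of each core against the remaining elements of $S$ to classify them: matching the $|C|=\Omega(\lambda n)$ core representatives against the (at most $n$) unclassified elements, a single core classifies all of $S$ in $\lceil n/|C|\rceil = O(1/\lambda)$ ER rounds, and there are at most $k=O(1/\lambda)$ cores, for a total of $O(1/\lambda^2)=O(1)$ rounds.

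Finally I would combine the three steps: Step~1 is free, Step~2 takes $O(1)$ rounds, and Step~3 takes $O(1)$ rounds, so the whole algorithm runs in $O(1)$ parallel rounds with $n$ processors. Since every class contributes a certified core and every element is compared against a representative of each core, each element matches exactly one core and is thereby placed in its class, establishing correctness. The subtle points to nail down are the core-size guarantee (ensuring $|C|$ is a constant fraction of $n$ so that the per-core round count is $O(1/\lambda)$) and the bound $k=O(1/\lambda)$ on the number of classes, which together make the round counts over all cores telescope to $O(1)$.
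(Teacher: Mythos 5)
Your proposal is correct and follows essentially the same route as the paper: the paper's proof is exactly the three-step algorithm (build $H_d$ for free, realize its $d$ Hamiltonian cycles in $O(d)=O(1)$ ER rounds, then use the large certified strongly connected component of size $\Omega(\lambda n)$ in each class to classify everyone in $O(1/\lambda)$ rounds per core). Your added details --- the matching decomposition of the cycles, the bound $k\le 1/\lambda$, and the explicit $O(1/\lambda^2)$ total for Step~3 --- are all consistent with, and slightly more careful than, the paper's own write-up.
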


This theorem is true regardless of whether or not $\lambda$ is known.  If the value of $\lambda$ is not known, it is possible to repeatedly run the ECS algorithm starting with an arbitrary constant of $0.4$ for $\lambda$ and halving the constant whenever the algorithm fails.  Once the value is less than the unknown $\lambda$, the algorithm will succeed and the number of rounds will be independent of $n$ and a function of only the constant $\lambda$.

As we show in the next section, this performance is optimal when 
$\ell\ge \lambda n$, for a fixed constant $\lambda\in(0,0.4]$.

\section{Lower Bounds} \label{sec:lower-bounds}

The following lower bound questions were left open by 
Jayapaul {\it et al.}~\cite{Jayapaul2015}:

\begin{itemize}
\item 
If every equivalence class has size $f$, the 
total number of comparisons needed to solve
the equivalence class sorting problem 
$\Theta(n^2/f)$ or $\Theta(n^2/f^2)$?
\item 
Is the total number of 
comparisons
for finding an element in the smallest equivalence class $\Theta(n^2/\ell)$ or $\Theta(n^2/\ell^2)$?
\end{itemize}
Speaking loosely these lower bounds can be thought of as a question of how difficult it is for an element to locate its equivalence class.  The $\Theta(n^2/f)$ and $\Theta(n^2/\ell)$ bounds can be interpreted as saying the average element needs to compare to at least one element in most of the other equivalence classes before it finds an equivalent element.  Because there must be ${x \choose 2}$ comparisons between $x$ equivalence classes, the $\Theta(n^2/f^2)$ and $\Theta(n^2/\ell^2)$ bounds say we do not need too many more comparisons then the very minimal number needed just to differentiate the equivalence classes.  It seems unlikely that so few comparisons are required and we prove that this intuition is correct by proving lower bounds of $\Omega(n^2/f)$ and $\Omega(n^2/\ell)$ comparisons.

Note that these lower bounds are on the total number of comparisons needed to accomplish a task, that is they bound the work a parallel algorithm would need to perform.  By dividing by $n$, they also give simple bounds on the number of rounds needed in either the ER or CR models.

With respect to such lower bound questions as these,
let us maintain the state of an algorithm's
knowledge about element relationships in a simple graph.  At each
step, the vertex set of this graph is a partition of the elements
where each set is a partially discovered equivalence class for $S$.  
Thus, each element in $S$ is associated with exactly one vertex in this graph
at each step of the algorithm, and a vertex can have multiple elements
from $S$ associated with it.
If a pair of elements was compared and found to not be equal, then
there should be an edge in between the two vertices containing those
elements.  So initially the graph has a vertex for each element and
no edges.  When an algorithm tests equivalence for a pair of elements,
then, if the elements are not equivalent, the appropriate edge is
added (if it is absent) and, if the elements are equivalent, the two
corresponding vertices are contracted into a new vertex whose set
is the union of the two.  A depiction of this is shown in
\autoref{fig:equiv-test}.  An algorithm has finished sorting once
this graph is a clique and the vertex sets are the corresponding
equivalence classes.

\begin{figure*}[t]
\centering
\includegraphics[scale=0.8]{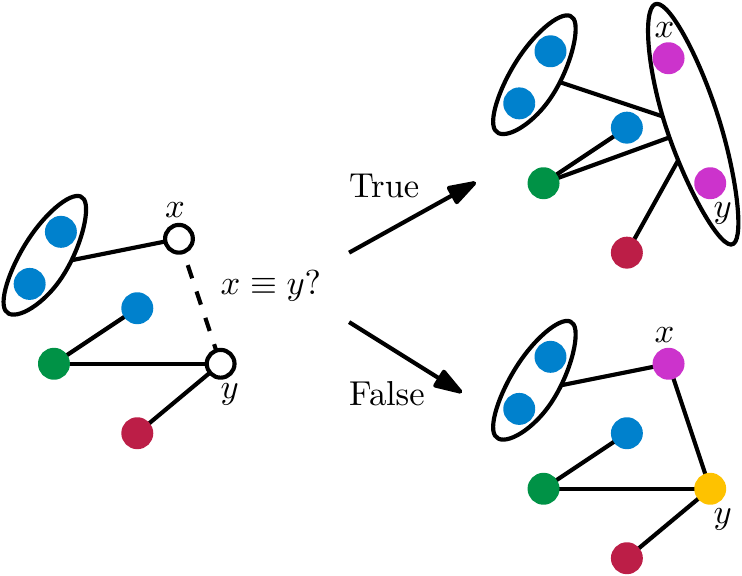}
\caption{We test if $x$ and $y$ are in the same equivalence class.  If they are, their vertices are contracted together.  If they are not, an edge is added.}
\label{fig:equiv-test}
\end{figure*}

An \emph{equitable $k$-coloring} of a graph is a proper coloring of a graph such that the size of each color class is either $\lfloor n/k \rfloor$ or $\lceil n/k \rceil$.  A \emph{weighted equitable $k$-coloring} of a vertex weighted graph is a proper coloring of a graph such that the sum of the weight in each color class is either $\lfloor n/k \rfloor$ or $\lceil n/k \rceil$.  Examples of these can be seen in \autoref{fig:equitable-colorings}.

\begin{figure*}[t]
\centering
\includegraphics[scale=0.8]{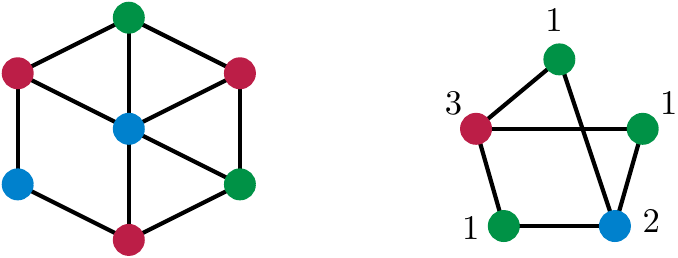}
\caption{On the left we have a graph with an equitable $3$-coloring and on the right we have a graph with a weighted equitable $3$-coloring.}
\label{fig:equitable-colorings}
\end{figure*}

An adversary for the problem of equivalence class 
sorting when every equivalence class has the same size $f$ (so $f$ divides $n$) must maintain that the graph has a weighted equitable $n/f$-coloring where the weights are the size of the vertex sets.  The adversary we describe here will maintain such a coloring and additionally mark the elements and the color classes in a special way.  It proceeds as follows.

First, initialize an arbitrary equitable coloring on the starting
graph that consists of $n$ vertices and no edges.  
For each comparison of two elements done by the adversary algorithm, let us
characterize how we react based on the following case analysis:
\begin{itemize}
\item 
If either of the elements is unmarked and this comparison would
increase its degree to higher than $n/4f$, then mark it as having 
``high'' element degree.

\item 
If either element is still unmarked, they currently have the same
color, and there is another unmarked vertex such that it is not
adjacent to a vertex with the color involved in the comparison and
no vertex with its color is adjacent to the unmarked vertex in the
comparison (i.e. we can have it swap colors with one of the vertices
in the comparison), then swap the color of that element and the unmarked
element in the comparison.

\item 
If either element is still unmarked, they currently have the same
color, and there is no other unmarked vertex with a different
unmarked color not adjacent to the color of the two elements being
compared, then mark all elements with the color involved in the
comparison as having ``high'' color degree and mark the color as having
``high'' degree.

\item At this point, 
either both elements are marked and we answer based on their color,
or one of the elements is unmarked and they have different colors,
so we answer ``not equal'' to the adversary algorithm.
\end{itemize}

\begin{figure*}[t]
\centering
\includegraphics[scale=1.2]{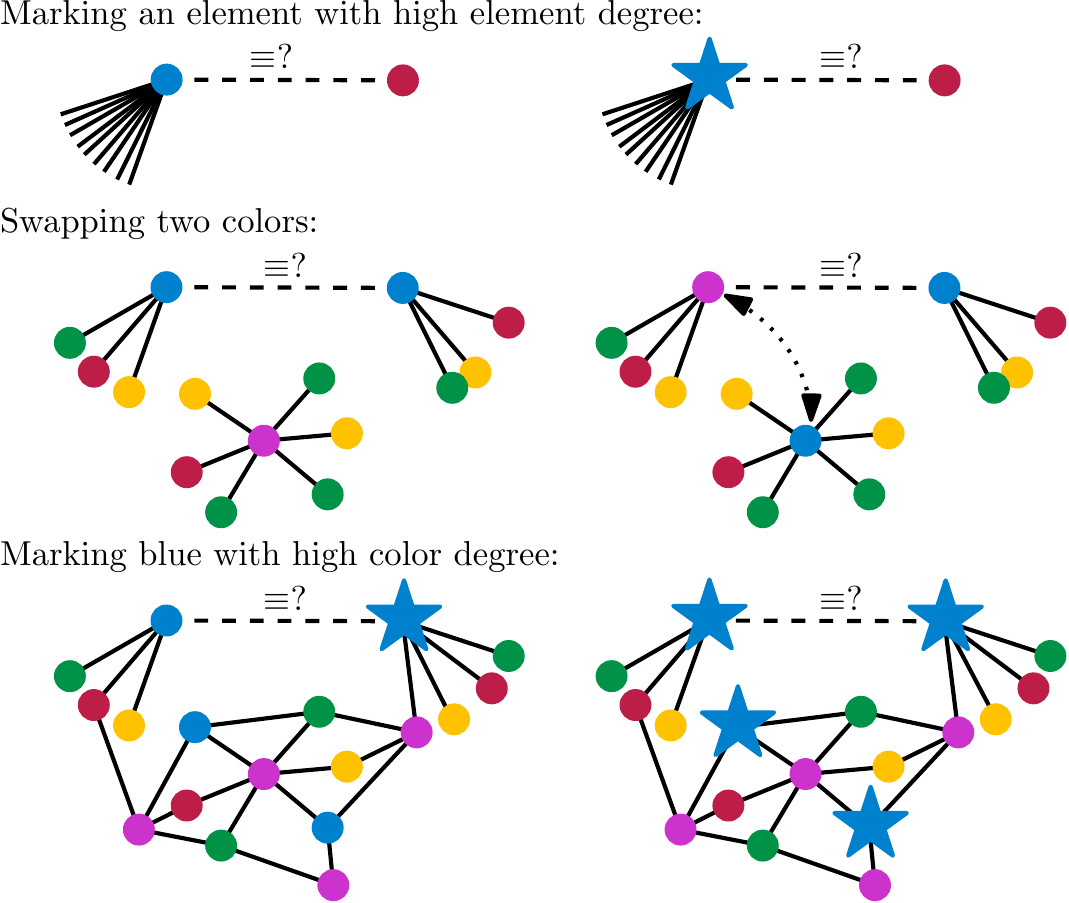}
\caption{Three cases of how the adversary works to mark vertices and swap colors.  The dashed line indicates the two elements being compared.  Marked vertices are denoted with stars.}
\label{fig:adversary}
\end{figure*}

At all times, the vertices that contain unmarked elements all have
weight one, because the adversary only answers equivalent for
comparisons once both vertices are marked.  When a color class is
marked, all elements in that color class are marked as having ``high''
color degree.  A few of the cases the adversary goes through are
depicted in Figure~\ref{fig:adversary}.

\begin{lemma}\label{lem:comp-count}
If $n/8$ elements are marked during the execution of an algorithm, then $\Omega(n^2/f)$ comparisons were performed.
\end{lemma}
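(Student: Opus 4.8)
The plan is to charge comparisons to edges of the adversary's graph: every edge is created by a distinct comparison that the adversary answered ``not equal,'' so the number of comparisons performed is at least the number of edges present at the moment the $n/8$-th element becomes marked, and it suffices to exhibit $\Omega(n^2/f)$ edges. I would first split the marked elements by the rule that marked them: an element either carries a ``high element degree'' mark or lies in a color class that received a ``high color degree'' mark. Since at least $n/8$ elements are marked, by pigeonhole either at least $n/16$ elements were marked for high element degree (call this set $A$), or at least $n/16$ marked elements lie in marked color classes (call this set $B$); I would handle the two cases separately, since the bound in whichever case holds suffices.

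The set-$A$ case is the easier one. By definition each $v\in A$ was marked precisely when an incident comparison was about to push its degree above $n/4f$, so at the instant of marking its degree is at least $n/4f-1$, and each incident edge records a genuine ``not equal'' comparison performed earlier. Summing degrees over $A$ and using that each edge is counted at most twice in a degree sum, the graph has at least $\tfrac12|A|(n/4f-1)\ge \tfrac12\cdot\tfrac{n}{16}(n/4f-1)=\Omega(n^2/f)$ edges, which gives the claim.

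For the set-$B$ case the key observation is the trigger for marking a color: a color $c$ is marked only when every unmarked vertex of every other unmarked color is already adjacent to $c$. I would apply this at the instant each color is marked. Because fewer than $n/8$ elements are marked at that instant, more than $7n/8$ vertices are unmarked, and since every unmarked vertex has weight one, at most $f$ of them share color $c$; hence $c$ has at least $7n/8-f$ distinct edges to unmarked vertices of other colors. Moreover $B$ consists of whole color classes of weight $f$, so at least $|B|/f\ge n/(16f)$ distinct colors are marked. Summing the ``$\ge 7n/8-f$'' edge counts over all marked colors and dividing by the largest number of times any single edge is tallied then yields the bound.

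The main obstacle is exactly this last double-counting step, since the same edge could a priori be tallied at several color-marking events. I would resolve it by arguing that an edge $xw$ is tallied at the marking of $c$ only when one endpoint currently has color $c$ (the ``color side'') while the other is unmarked, and that once a vertex's color class is marked the vertex becomes marked and is never recolored; hence each endpoint can serve as the color side for exactly the single event that marks its final color, so every edge is counted at most twice. It follows that the graph contains at least $\tfrac12\cdot\tfrac{n}{16f}(7n/8-f)=\Omega(n^2/f)$ edges, completing this case and the lemma.
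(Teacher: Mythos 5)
Your overall strategy is the same as the paper's: split the marked elements according to whether they were marked for ``high'' element degree or because their color class was marked, and charge $\Omega(n/f)$ comparisons to each element of the first kind and $\Omega(n)$ comparisons to each marked color class. (The paper folds the two cases into a single sum $ni/2+nj/4f$ with $if+j\ge n/8$ rather than pigeonholing into two cases, but that is only packaging.) Your set-$A$ case and your double-counting resolution in case $B$ are both fine.

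There is, however, a genuine gap in the set-$B$ case: you mischaracterize the trigger for marking a color class. The adversary marks color $c$ only when no valid swap candidate exists, and a candidate unmarked vertex $w$ of a different unmarked color can be disqualified in \emph{two} ways: (i) $w$ is adjacent to some vertex of color $c$, or (ii) some vertex of $w$'s color is adjacent to the unmarked element $x$ involved in the comparison. You account only for (i), and so your conclusion that color $c$ has at least $7n/8-f$ edges to unmarked vertices of other colors does not follow --- a candidate killed by (ii) contributes no edge incident to $c$ at all. The repair is exactly the observation the paper's proof makes: since $x$ is unmarked it has degree less than $n/4f$, and each color class has weight $f$, so at most $(n/4f)\cdot f = n/4$ candidates are disqualified via (ii). Subtracting these (together with the $O(f)$ vertices sharing the relevant colors) from the more than $7n/8$ unmarked vertices still leaves $\Omega(n)$ --- roughly $n/2$ --- vertices that must each be adjacent to color $c$, after which your edge count and double-counting argument go through with a smaller constant and the lemma follows.
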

\begin{proof}
There are three types of marked vertices: those with ``high'' element 
degree marks, those with ``high'' color degree marks, 
and those with both marks.  

The color classes must have been marked as having ``high'' degree
when a comparison was being performed between two elements of that
color class and there were no unmarked color candidates to swap
colors with.  Because one of the elements in the comparison had
degree less than $n/4f$, only a quarter of the elements have a color
class it cannot be swapped with.  So if there were at least $n-n/4$
unmarked elements in total, then the elements in the newly marked color
class must have been in a comparison $n/2$ times.

The ``high'' element degree 
elements were involved in at least $n/4f$ comparisons each.
So if $i$ color classes were marked and $j$ elements were 
only marked with ``high'' element degree, then 
the marked elements must have been a part of a test at least 
$ni/2 + nj/4f \geq (i + j/f)n/4$ times.  
Once $if + j \geq n/8$, then at least $n^2/64f$ equivalence tests 
were performed.
\end{proof}

\begin{theorem}
If every equivalence class has the same size $f$, then sorting requires at least $\Omega(n^2/f)$ equivalence comparisons.
\end{theorem}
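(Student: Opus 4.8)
The plan is to reduce the theorem to Lemma~\ref{lem:comp-count} by showing that any algorithm correctly sorting $S$ against the adversary described above is forced to mark at least $n/8$ elements. Since that lemma already converts the hypothesis ``$n/8$ elements marked'' into the conclusion ``$\Omega(n^2/f)$ comparisons performed,'' establishing this marking bound completes the argument. I may assume $f\ge 2$, since the case $f=1$ is degenerate (every element is its own class and no comparison ever returns ``equal'').

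First I would invoke the two structural invariants guaranteed by the adversary. The adversary answers ``equivalent'' for a comparison only once \emph{both} of the vertices involved are marked; hence every vertex of weight exceeding one arises from contracting marked vertices, and, dually, every vertex that still contains an unmarked element has weight exactly one. In particular, an unmarked element has never been merged with any of the other $f-1$ members of its true equivalence class. Next I would apply the definition of termination: the algorithm halts only when its knowledge graph is a clique whose vertex sets are the equivalence classes, that is, when every vertex has weight exactly $f$. Because $f\ge 2$, such a weight-$f$ vertex contains only marked elements, so in the terminal state no weight-one vertex survives and therefore \emph{all} $n$ elements are marked. A fortiori at least $n/8$ elements are marked during the execution, and Lemma~\ref{lem:comp-count} yields the desired $\Omega(n^2/f)$ bound on the total number of comparisons.

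I expect the main obstacle to lie not in this final counting step but in the legitimacy of the adversary it rests on: one must be certain that the marking-and-swapping strategy always answers consistently with some genuine weighted equitable $n/f$-coloring, so that the algorithm truly must drive every element into a marked, weight-$f$ vertex rather than being able to terminate correctly on a partially revealed instance. This is precisely what the preceding case analysis secures. The ``high element degree'' threshold of $n/4f$ together with the availability of an unmarked color to swap into guarantees that, while fewer than $n/4$ elements are marked (a regime that comfortably contains the $n/8$ threshold we need), the adversary can always extend to a valid coloring and can keep every unmarked vertex at weight one. To make the dependence on this regime explicit, I would invoke Lemma~\ref{lem:comp-count} at the first moment the number of marked elements reaches $n/8$, which occurs strictly before the adversary leaves its valid range; granting this invariant, the reduction is immediate and the theorem follows.
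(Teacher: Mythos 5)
Your proposal is correct and takes essentially the same route as the paper: the paper's proof likewise observes that at termination every vertex has weight $f$, hence all elements are marked, and then invokes Lemma~\ref{lem:comp-count}. Your additional remarks (handling $f=1$, the weight-one invariant for unmarked elements, and when to invoke the lemma) are sound elaborations of the same argument rather than a different approach.
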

\begin{proof}
When an algorithm finishes sorting, each vertex will have weight
$f$ and so the elements must all be marked.  Thus, by
\autoref{lem:comp-count}, at least $\Omega(n^2/f)$ comparisons must
have been performed.
\end{proof}

We also have the following lower bound as well.

\begin{theorem}
Finding an element in the smallest equivalence class, whose size is $\ell$, requires at least $\Omega(n^2/\ell)$ equivalence comparisons.
\end{theorem}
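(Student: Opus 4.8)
The plan is to reuse, essentially verbatim, the adversary and marking machinery developed for the equal-size case, specialized to the instance family in which every equivalence class has size exactly $\ell$ (so $\ell$ divides $n$ and $\ell$ is simultaneously the smallest class size). With this choice the adversary maintains a weighted equitable $n/\ell$-coloring exactly as before, and \autoref{lem:comp-count} applies with $f=\ell$: if $n/8$ elements become marked during the run, then $\Omega(n^2/\ell)$ comparisons were performed. Thus the whole theorem reduces to a single claim: \emph{any algorithm that correctly outputs an element guaranteed to lie in a smallest equivalence class must, when run against this adversary, cause at least $n/8$ elements to be marked.} Everything else is inherited from the previous proof.

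First I would pin down what ``correctly outputs an element of a smallest class'' demands of the algorithm's knowledge graph. Since the adversary keeps all classes at size $\ell$, an output element $x$ is correct only if no instance consistent with the recorded comparisons places $x$ in a non-minimal class; in particular the algorithm must have fully discovered $x$'s class (its $\ell$ members contracted into one weight-$\ell$ vertex) and must have ruled out both (i) growing $x$'s class by absorbing an unmarked, low-degree singleton, and (ii) shrinking some other class below $\ell$. The strategy is to show that as long as fewer than $n/8$ elements are marked, the adversary retains enough unmarked, weight-one, low-degree (degree below $n/4\ell$) vertices, together with the freedom to re-shuffle colors among them, to exhibit two consistent completions that disagree on whether $x$ sits in a smallest class. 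Hence the algorithm cannot yet have halted correctly, and correctness forces the mark count past $n/8$. With that claim in hand the theorem follows immediately by invoking \autoref{lem:comp-count} with $f=\ell$, yielding the $\Omega(n^2/\ell)$ bound on total comparisons.

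The hard part will be this middle step, and it is genuinely harder here than in the equal-size sorting theorem. There, completion of sorting forces every vertex to have weight $f$, so all elements are trivially marked and the mark threshold comes for free. For the present, strictly weaker task of certifying a single element, I must actively prove that marks are forced, and the delicate point is maintaining a consistent \emph{refuting} instance while few elements are marked. In particular I expect to have to argue carefully that a low-degree unmarked singleton can be relocated in the coloring without violating any recorded ``not equal'' edge, and to handle both refutation modes --- enlarging $x$'s class and shrinking a competitor class --- at once, since blocking only one of them (for example, closing $x$'s class off with not-equal edges to every other element) is cheap and marks only the $\ell$ members of $x$'s class rather than the required constant fraction of $S$. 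Reconciling these competing demands, and showing that defeating all of them drives the color-degree marking cascade often enough to mark $n/8$ elements, is where the real work lies.
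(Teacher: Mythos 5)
Your reduction to ``force $n/8$ marks, then invoke Lemma~\ref{lem:comp-count} with $f=\ell$'' matches the paper's final counting step, but the heart of the argument --- the claim that a correct algorithm must cause $n/8$ marks --- is exactly the part you defer as ``where the real work lies,'' and your chosen setup makes that part unworkable rather than merely hard. If the adversary commits to the family in which \emph{every} class has size exactly $\ell$, then every element lies in a smallest class, so the refuting completions you need (one placing the output element $x$ in a smallest class, one not) cannot both live inside that family; the weighted equitable $n/\ell$-coloring you propose to maintain is then not the invariant that witnesses the algorithm's possible incorrectness, and you are left with no mechanism that actually forces marks. Your own observation that the algorithm could cheaply ``close off'' $x$'s class is symptomatic: in the symmetric all-equal world there is no structural reason the algorithm must touch a constant fraction of $S$.

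The paper resolves this with an \emph{asymmetric} coloring and an extra protection rule, and these are the missing ideas. It gives $\ell$ vertices a special smallest-class color (scc) and partitions the remaining $n-\ell$ vertices into classes of size roughly $\ell+1$, so the smallest class is unique and the question ``is $x$ in it?'' is genuinely contingent. It then adds a rule: whenever an scc element is about to be marked as having high degree, swap its color with a valid unmarked vertex. Consequently, as long as fewer than $n/8$ elements are marked, no scc element is ever marked, so any element the algorithm outputs can still be swapped out of the scc class by a consistent recoloring --- the algorithm is refuted. Correctness therefore forces $n/8$ marks \emph{before any scc element is pinned down}, and the counting of Lemma~\ref{lem:comp-count} (with degree threshold $n/4\ell$) finishes the proof. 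Without the unique-smallest-class setup and the scc swap rule, your plan has no way to exhibit the disagreeing completions you correctly identify as necessary.
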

\begin{proof}
We use an adversary argument similar to the previous one, but we start
with $\ell$ vertices colored a special \emph{smallest class color
(scc)} and seperate the remaining $n-\ell$ vertices into $\lfloor(n
- \ell)/(\ell + 1)\rfloor$ color classes of size $\frac{n}{\lfloor(n
- \ell)/(\ell + 1)\rfloor}$ or $\frac{n}{\lfloor(n - \ell)/(\ell +
1)\rfloor} + 1$.

There are two changes to the previous adversary responses.  First,
the degree requirement for having ``high'' degree is now $n/4\ell$.
Second, if an scc element is about to be marked as having ``high'' degree,
we attempt to swap its color with any valid unmarked vertex.  Otherwise,
we proceed exactly as before.

If an algorithm attempts to identify an element as belonging to the
smallest equivalence class, no scc elements are marked, and
there have been fewer than $n/8$ elements marked, then the identified
element must be able to be swapped with a different color and the
algorithm made a mistake.  Therefore, to derive a lower bound 
for the total number of comparisons,
it suffices to derive a lower bound for the number of equivalence tests
until an scc element is marked.

The scc color class cannot be marked as having ``high'' color degree
until at least one scc element has high element degree.  However,
as long as fewer than $n/8$ elements are marked, we will never mark
an scc element with ``high'' degree.  So at least $n/8$ elements need
to be marked as having ``high'' element degree or ``high'' color degree and,
by the same type of counting as in Lemma~\ref{lem:comp-count},
$\Omega(n^2/\ell)$ equivalence tests are needed.  \end{proof}

\section{Sorting Distributions} \label{sec:sort-dists}

In this subsection, we study a version of the equivalence class sorting
problem where we are
given a distribution, $D$, on a countable set, $S$, 
and we wish to enumerate the set in order of most likely to 
least likely, $s_0, s_1, s_2,\dots$.  
For example, consider the following distributions:
\begin{itemize}
\item
Uniform: In this case, $D$ is a distribution on $k$ equivalence classes,
with each equivalence class being equally likely for every element of $S$.
\item
Geometric: Here, $D$ is a distribution such that the $i$th 
most probable equivalence class has probability $p^i(1-p)$.
Each element ``flips'' a biased coin where ``heads'' occurs with probability 
$p$ until it comes up ``tails.'' Then that element is in equivalence
class $i$ if it flipped $i$ heads.
\item
Poisson:
In this case, $D$ is model of the number of times an event occurs in
an interval of time, with an expected number of events determined by a 
parameter $\lambda$. 
Equivalence class $i$ is defined to be all the samples
that have the same number of events occurring, where the probability of
$i$ events occurring is 
\[
\frac{\lambda^i e^{-\lambda}}{i!}\ .
\]
\item
Zeta: 
This distribution, $D$, is related to Zipf's law, and models when the
sizes of the equivalence classes follows a power law, based on 
a parameter, $s>1$, which is common
in many real-world scenarios, such as the frequency of words in natural
language documents.
With respect to equivalence classes, the $i$th equivalence class 
has probability
\[
\frac{i^{-s}}{\zeta(s)},
\]
where $\zeta(s)$ is Riemann zeta function (which normalizes the probabilities
to sum to 1).
\end{itemize}

So as to number equivalence classes from most likely to least likely,
as $i=0,1,\ldots$, 
define $D_\mathbb{N}$ to be a distribution on the natural numbers such that 
\[
\Pr_{x\sim D_\mathbb{N}} \left[ x = i\right] = \Pr_{y\sim D} \left[ y = s_i\right].
\]
Furthermore,
so as to ``cut off'' this distribution at $n$,
define $D_\mathbb{N}(n)$ to be a distribution on the natural numbers less
than or equal to $n$ such that, for $0 \leq i<n$, 
\[
\Pr_{x\sim D_\mathbb{N}(n)} \left[ x = i\right] = \Pr_{y\sim D_\mathbb{N}} \left[ y = i\right]
\]
and 
\[
\Pr_{x\sim D_\mathbb{N}(n)} \left[ x = n\right] = \Pr_{y\sim D_\mathbb{N}} \left[ y \geq n\right].
\]
That is, we are ``piling up'' the tail of the $D_\mathbb{N}$ distribution
on $n$. 

The following theorem shows that we can use $D_\mathbb{N}(n)$ to bound
the number of comparisons in an ECS algorithm
when the equivalence classes are drawn from $D$.
In particular, we focus here on 
an algorithm by Jayapaul {\it et al.}~\cite{Jayapaul2015} for equivalence
class sorting,
which involves a round-robin testing regiment, such
that each element, $x$, initiates
a comparison with the next element, $y$, with an unknown relationship to $x$,
until all equivalence classes are known.

\begin{theorem}\label{thm:dist-runtime}
Given a distribution, $D$, on a set of equivalence classes, 
then $n$ elements who have corresponding equivalence class 
independently drawn from $D$ can be equivalence class sorted using 
a total number of comparisons stochastically dominated by twice the sum of $n$ draws 
from the distribution $D_\mathbb{N}(n)$.
\end{theorem}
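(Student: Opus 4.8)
The plan is to charge the algorithm's comparisons to individual elements, to bound each element's charge by the rank of its equivalence class, and then to translate these per-element bounds into a statement about draws from $D_\mathbb{N}(n)$ through a coupling. First I would reuse the graph-and-contraction bookkeeping from the lower-bound section to describe a run of the round-robin algorithm: maintain one representative per discovered class, and observe that once two elements are found equivalent their vertices are contracted, so by transitivity an element never needs to be compared against two elements already known to lie in the same class. Consequently, the comparisons chargeable to a fixed element $x$ reduce to at most one comparison against a representative of each distinct class it meets, continuing until it is matched to a representative of its own class. I would therefore set $\mathrm{cost}(x)$ to be the number of distinct classes $x$ is compared against, up to and including the comparison that locates its own class, so that the total number of comparisons is at most $\sum_x \mathrm{cost}(x)$.

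The second step is to relate $\mathrm{cost}(x)$ to the rank of $x$'s class. Indexing the classes $s_0, s_1, \dots$ by decreasing probability under $D$, so that $x$'s class has some rank $r_x$, I would argue that $x$ rules out at most the $r_x$ classes more probable than its own before a matching comparison confirms its class, giving $\mathrm{cost}(x) \le r_x + 1$. Since at most $n$ distinct classes can appear among $n$ sampled elements, any rank may be truncated at $n$ without affecting the count, and this truncation is precisely the ``piling up of the tail at $n$'' that turns $D_\mathbb{N}$ into $D_\mathbb{N}(n)$. Because the elements draw their classes independently from $D$, the truncated ranks $r_{x_1}, \dots, r_{x_n}$ are then i.i.d.\ draws $X_1, \dots, X_n$ from $D_\mathbb{N}(n)$, and the confirming ``$+1$'' per element is what the factor of two in the statement is meant to absorb.

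The final step is to assemble these bounds into a stochastic-domination statement. I would build an explicit coupling between the algorithm's run and the sequence $X_1, \dots, X_n$ so that, simultaneously, $\mathrm{cost}(x_j) \le 2 X_j$ for every $j$; summing then gives that the total number of comparisons is at most $2\sum_j X_j$ under the coupling, and since a coupling realizing a pointwise inequality witnesses stochastic domination of the sums, the theorem follows. The main obstacle, and the step I expect to require the most care, is the rank bound $\mathrm{cost}(x) \le r_x + 1$: the round-robin does not examine classes in order of decreasing probability, so I must argue that transitive contraction of equivalent vertices prevents an element from being charged more than once per class, and that across the whole run no element is charged for classes strictly less probable than its own before its class is located. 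Making this argument order-independent, and reconciling the per-element confirming comparison with the global factor of two rather than an additive term of order $n$, is the delicate accounting that the coupling must get right.
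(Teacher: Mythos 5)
Your plan hinges on the per-element rank bound $\mathrm{cost}(x) \le r_x + 1$, and that bound is false. The round-robin algorithm compares $x$ against whichever elements happen to come next with unknown relation to $x$; this order has nothing to do with the probabilities of their classes. An element $x$ whose class is the most likely one ($r_x = 0$) can perfectly well meet representatives of many rarer classes before it first encounters another member of its own class, so $\mathrm{cost}(x)$ can be as large as the number of distinct classes present in the sample, not $r_x+1$. You flag this as ``the step requiring the most care,'' but no amount of care rescues it as a per-element statement, and therefore the coupling you want — $\mathrm{cost}(x_j) \le 2X_j$ simultaneously for every $j$ — cannot be realized.

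What makes the theorem true is an aggregate, pairwise accounting rather than a per-element one. The paper invokes a lemma of Jayapaul {\it et al.}: for any two equivalence classes $i$ and $j$, the round-robin algorithm performs at most $2\min(Y_i,Y_j)$ tests between them in total, where $Y_i$ is the number of sampled elements in class $i$. Summing over pairs and charging each pair's comparisons to the \emph{less probable} class (index $i>j$), one gets
\[
R \;\le\; 2\sum_{i}\sum_{j<i}\min(Y_i,Y_j)
  \;\le\; 2\Bigl(\sum_{i=0}^{n} i\,Y_i + \sum_{i>n} n\,Y_i\Bigr)
  \;=\; 2\sum_{x} \min(r_x,\,n),
\]
using $\min(Y_i,Y_j)\le Y_i$ for $i\le n$ and $\sum_{j<i}\min(Y_i,Y_j)\le nY_i$ for $i>n$. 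The right-hand side is exactly twice the sum of $n$ i.i.d.\ draws from $D_\mathbb{N}(n)$, and the inequality holds pointwise on every sample, so stochastic domination is immediate — no coupling construction is needed. Note that an individual element of class $i$ may participate in far more than $2i$ comparisons; it is only the class-$i$ elements \emph{collectively} that account for at most $2iY_i$ comparisons against more likely classes. Your intuition that contraction prevents repeated comparisons against a known class is fine, but the decisive ingredient you are missing is the $2\min(Y_i,Y_j)$ pairwise bound, which is a structural property of the round-robin schedule, not a consequence of transitivity alone.
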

\begin{proof}
Let $V_i$ denote the random variable that is equal to the natural number
corresponding to the equivalence class of element $i$
in $D_\mathbb{N}(n)$.  
We denote the number of 
elements in equivalence class $i$ as $Y_i$.  
Let us denote
the number of equivalence tests performed by 
the algorithm 
by Jayapaul {\it et al.}~\cite{Jayapaul2015}
using the random variable, $R$.  

By a lemma from~\cite{Jayapaul2015}, 
for any pair of equivalence classes, $i$ and $j$,
the round-robin ECS algorithm 
performs at most $2 \min(Y_i,Y_j)$ equivalence tests in total.  
Thus, the total number of equivalence tests in our distribution-based
analysis is upper bounded by 

\begin{eqnarray*}
R & \leq & \sum_{i=0}^\infty \sum_{j=0}^{i-1} 2\min(Y_i,Y_j)\\
 & = & 2\sum_{i=0}^n \sum_{j=0}^{i-1} \min(Y_i,Y_j) 
     + 2\sum_{i=n+1}^\infty \sum_{j=0}^{i-1} \min(Y_i,Y_j)\\
  &\leq& 2\sum_{i=0}^n \sum_{j=0}^{i-1} Y_i  + 2\sum_{i=n+1}^\infty nY_i\\
  & \leq & 2\left(\sum_{i=0}^n i Y_i + \sum_{i=n+1}^\infty n Y_i\right) = 2 \sum_{i=1}^{n} V_i
\end{eqnarray*}

The second line in the above simplification 
is a simple separation of the double summation
and the third line follows because $\sum_{j=0}^{i-1} \min(Y_i,Y_j)$ is zero 
if $Y_i$ is zero and at most $n$, otherwise.  
So the total number of comparisons in the algorithm is 
bounded by twice the sum of $n$ draws from $D_\mathbb{N}(n)$.  
\end{proof}

Given this theorem, we can apply it to a number of distributions to 
show that the total number of comparisons performed is linear with
high probability.

\begin{theorem}
If $D$ is a discrete uniform, a geometric, or a Poisson distribution on a set equivalence classes, then it is possible to equivalence class sort 
using linear total number of comparisons 
with exponentially high probability.
\end{theorem}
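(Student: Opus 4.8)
The plan is to combine Theorem~\ref{thm:dist-runtime} with a Cram\'er--Chernoff concentration argument. By that theorem, the total number of comparisons is stochastically dominated by $2\sum_{i=1}^n V_i$, where the $V_i$ are independent draws from $D_\mathbb{N}(n)$. Since $D_\mathbb{N}(n)$ is obtained from $D_\mathbb{N}$ by relocating all mass above $n$ down onto the value $n$, it is stochastically dominated by $D_\mathbb{N}$, so it suffices to prove that $\sum_{i=1}^n V_i = O(n)$ with exponentially high probability when the $V_i$ are drawn from the uncut distribution $D_\mathbb{N}$. The goal then reduces to showing that a sum of $n$ i.i.d.\ copies of a single light-tailed random variable concentrates linearly.

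The main tool I would use is the moment generating function. Writing $M(t) = \mathbb{E}_{V\sim D_\mathbb{N}}[e^{tV}]$, the key claim is that for each of the three distributions there is a constant $t_0>0$, depending only on the distribution's parameters and not on $n$, with $M(t_0)<\infty$. For the \emph{uniform} distribution on $k$ classes, $V$ is bounded by $k-1$, so $M(t)<\infty$ for every $t$. For the \emph{geometric} distribution the probabilities $p^i(1-p)$ are already in decreasing order, so $D_\mathbb{N}$ is the geometric law itself and $M(t) = (1-p)/(1-pe^t)$ is finite for all $0<t<\ln(1/p)$. The \emph{Poisson} case is the one requiring real work, since the probabilities $\lambda^i e^{-\lambda}/i!$ first increase and then decrease and $D_\mathbb{N}$ is a genuine reordering by probability; here I would argue that the $j$-th largest Poisson probability decays super-exponentially in $j$, using Stirling's formula to show that only $O\!\left(\ln(1/\theta)/\ln\ln(1/\theta)\right)$ classes have probability at least $\theta$, so that the rank-$j$ probability $q_j$ is at most roughly $e^{-j\ln j}$, which forces $\sum_j e^{tj}q_j$ to converge for every $t>0$.

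With $M(t_0)<\infty$ in hand, the finish is routine. By independence and Markov's inequality applied to $e^{t_0\sum_i V_i}$,
\[
\Pr\!\left[\sum_{i=1}^n V_i \ge a n\right] \;\le\; e^{-t_0 a n}\, M(t_0)^n \;=\; e^{-n\left(t_0 a - \ln M(t_0)\right)}.
\]
Choosing the constant $a$ large enough that $t_0 a > \ln M(t_0)$ makes the exponent a negative constant times $n$, so $\sum_{i=1}^n V_i < an$, and hence the comparison count is at most $2an = O(n)$, except with probability $e^{-\Omega(n)}$. Because $a$, $t_0$, and $M(t_0)$ all depend only on the fixed distribution parameters, this is linear in $n$ with exponentially high probability, as claimed.

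The step I expect to be the main obstacle is verifying finiteness of the moment generating function in the Poisson case: unlike the uniform and geometric laws, the reordering by probability has no closed form, so the argument must pass through a tail estimate on the sorted probabilities rather than a direct computation. Everything else---the reduction through Theorem~\ref{thm:dist-runtime}, the stochastic-domination simplification, and the Chernoff calculation---is standard once that tail bound is established.
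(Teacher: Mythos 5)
Your proposal is correct and follows the same overall route as the paper: reduce to the sum of $n$ draws from $D_\mathbb{N}(n)$ via Theorem~\ref{thm:dist-runtime}, dominate that by $D_\mathbb{N}$, and then apply a Chernoff-type tail bound separately to each distribution. The only substantive divergence is in the Poisson case. The paper simply applies the standard exponential-moment bound to a sum of i.i.d.\ $\mathop{Poisson}(\lambda)$ variables, obtaining $\Pr[Y > (\lambda(e-1)+1)n] \le e^{-n}$, and does not engage with the fact that $D_\mathbb{N}$ reindexes the classes in decreasing order of probability; you confront that reindexing head-on with a Stirling-based estimate showing the rank-$j$ sorted probability decays like $e^{-j\ln j}$. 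Your route works, but it is heavier than necessary: since $D_\mathbb{N}$ assigns the $m+1$ largest probabilities to the values $0,\dots,m$, its tail $\Pr[V>m]$ is at most the tail of the unsorted distribution that puts mass $\lambda^i e^{-\lambda}/i!$ on the value $i$, so the sorted law is stochastically dominated by the unsorted one and the ordinary Poisson moment generating function already suffices --- no estimate on the sorted probabilities is needed. (This one-line observation is also the missing justification for the paper's own Poisson computation, so your instinct that something had to be said there was sound.) Your unified treatment of all three cases through a single finite-MGF criterion is a modest gain in generality over the paper, which handles the uniform case by a deterministic bound and the geometric case by a Binomial-coupling Chernoff bound; both versions deliver the claimed $e^{-\Omega(n)}$ failure probability.
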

\begin{proof}
The sum of $n$ draws from 
$D_\mathbb{N}(n)$ is stochastically dominated 
by the sum of $n$ draws from $D_\mathbb{N}$.
Let us consider each distribution in turn.

\begin{itemize}
\item
Uniform:
The sum of $n$ draws from a discrete uniform distribution 
is bounded by $n$ times the maximum value.
\item
Geometric:
Let $p$ be the parameter of a geometric distribution and let 
$X = \sum_{i=0}^{n-1} X_i$ where the $X_i$ are drawn from $\mathop{Geom}(p)$,
which is, of course, related to the Binomial distribution, $\mathop{Bin}(n,p)$,
where one flips $n$ coins with probability $p$ and records 
the number of ``heads.''
Then, by a Chernoff bound 
for the geometric distribution (e.g., see~\cite{mitzenmacher2005probability}),
\begin{eqnarray*}
\Pr[X - (1/p)n > k] & = & \Pr[\mathop{Bin}(k + (1/p)n,p) < n] \\
                   & \leq & e^{-2\frac{(pk + n - n)^2}{k + (1/p)n}}\\
\Pr[X > (2/p)n] & \leq & e^{-np}
\end{eqnarray*}
\item
Poisson:
Let $\lambda$ be the parameter of a Poisson distribution and let $Y = \sum_{i=0}^{n-1} Y_i$ where the $Y_i$ are drawn from $\mathop{Poisson}(\lambda)$.
Then, by a Chernoff bound 
for the Poisson distribution (e.g., see~\cite{mitzenmacher2005probability}),
\begin{eqnarray*}
\Pr[Y > (\lambda (e-1) + 1) n] & = & \Pr[e^Y > e^{(\lambda (e-1) + 1)n}] \\
& \leq & \frac{(E[e^{Y_i}])^n}{e^{(\lambda (e-1) + 1)n}} \\
 & = & \frac{e^{\lambda(e - 1)n}}{e^{(\lambda (e-1) + 1)n}} = e^{-n}
\end{eqnarray*}
\end{itemize}
So, in each case with exponentially high probability, the sum of $n$ draws 
from the distribution is $O(n)$ and the round-robin 
algorithm does $O(n)$ total equivalence tests.
\end{proof}

We next address the zeta distribution.

\begin{theorem}
Given a zeta distribution with parameter $s>2$, $n$ elements who have corresponding equivalence class independently drawn from the zeta distribution can be equivalence class sorted in $O(n)$ work in expectation.
\end{theorem}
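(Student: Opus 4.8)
The plan is to reduce the claim to a single mean computation by chaining Theorem~\ref{thm:dist-runtime} with linearity of expectation. By that theorem, the total number of comparisons $R$ is stochastically dominated by $2\sum_{i=1}^n V_i$, where each $V_i$ is an independent draw from $D_\mathbb{N}(n)$. Taking expectations and using linearity, it suffices to prove that
\[
E[R] \;\le\; 2\,E\!\left[\sum_{i=1}^n V_i\right] \;=\; 2n\, E_{V\sim D_\mathbb{N}(n)}[V] \;=\; O(n),
\]
i.e. that a single draw from $D_\mathbb{N}(n)$ has expectation bounded by a constant depending only on $s$.

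First I would dispose of the truncation. As already observed in the proof of the preceding theorem, a draw from $D_\mathbb{N}(n)$ is stochastically dominated by a draw from the untruncated $D_\mathbb{N}$, since piling the tail mass onto $n$ only moves probability from values at least $n$ down to $n$. Hence $E_{V\sim D_\mathbb{N}(n)}[V] \le E_{V\sim D_\mathbb{N}}[V]$, and it is enough to bound the mean of the full zeta distribution.

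The key step is then the mean computation itself. Since the $i$th class has probability $i^{-s}/\zeta(s)$,
\[
E_{V\sim D_\mathbb{N}}[V] \;=\; \sum_{i=1}^{\infty} i\cdot \frac{i^{-s}}{\zeta(s)} \;=\; \frac{1}{\zeta(s)}\sum_{i=1}^\infty i^{-(s-1)} \;=\; \frac{\zeta(s-1)}{\zeta(s)}.
\]
This series converges exactly when $s-1 > 1$, that is, when $s > 2$, which is precisely the hypothesis. Thus the mean equals a finite constant $c_s = \zeta(s-1)/\zeta(s)$ independent of $n$, and combining the three steps yields $E[R] \le 2 c_s\, n = O(n)$.

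Finally I would explain why the conclusion is stated only \emph{in expectation}, unlike the exponentially-high-probability bounds for the uniform, geometric, and Poisson cases. The zeta distribution is heavy-tailed and its moment generating function diverges, so the Chernoff-style argument used earlier is simply unavailable; indeed the variance is finite only for $s>3$, so even a second-moment concentration bound would need a stronger hypothesis. The main obstacle is therefore not the analysis, which is a short computation, but recognizing that the threshold $s>2$ is exactly the convergence boundary of the mean and that no stronger tail guarantee can be extracted under the stated assumption.
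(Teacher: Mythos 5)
Your proposal is correct and follows essentially the same route as the paper's own proof: both reduce the claim via Theorem~\ref{thm:dist-runtime} and linearity of expectation to the observation that the zeta distribution's mean is $\zeta(s-1)/\zeta(s)$, a constant precisely when $s>2$. Your write-up simply makes explicit the steps (stochastic dominance of $D_\mathbb{N}(n)$ by $D_\mathbb{N}$, the convergence threshold, and why only an expectation bound is available) that the paper leaves implicit.
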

\begin{proof}
When $s > 2$, the mean of the zeta distribution is 
\[
\frac{\zeta(s-1)}{\zeta(s)},
\]
which is a constant.
So the sum of $n$ draws from the distribution is expected to be linear.
Therefore, the expected total number of 
comparisons in the round-robin algorithm is linear.
\end{proof}

Unfortunately, for zeta distributions it is not immediately
clear if it is possible to improve the above theorem
so that total number of comparisons is shown to be linear 
when $2 \geq s > 1$ or obtain high probability bounds on these bounds.  
This uncertainty 
motivates us to look experimentally at how 
different values of $s$ cause the runtime to behave.
Likewise, our high-probability bounds on the total number 
of comparisons in the round-robin algorithm for the other distibutions
invites experimental analysis as well.

\section{Experiments} \label{sec:sort-exper}

In this section, we report on experimental validatations of 
the theorems from the
previous section and investigations of the behavior of running 
the round-robin algorithm on the
zeta distribution.  For the uniform, geometric, and Poisson distributions,
we ran ten tests on sizes of $10,000$ to $200,000$ elements
incrementing in steps of $10,000$.  For the zeta distribution,
because setting $s < 2$ seems to lead to a super linear number of
comparisons, we reduced the test sizes by a factor of $10$ and ran
ten tests each on sizes from $1,000$ to $20,000$ in increments of
$1,000$.  For each distribution we used the following parameter
settings for various experiments:

\begin{center}
\begin{tabular}{ l l }
Uniform: & $k = 10,25,100$\\
Geometric: & $p = \frac{1}{2},\frac{1}{10},\frac{1}{50}$\\
Poisson: & $\lambda = 1,5,25$\\
Zeta: & $s = 1.1,1.5,2,2.5$
\end{tabular}
\end{center}

The results of these tests are plotted in Figure~\ref{fig:exper-results}.  
Best fit lines were fitted whenever we have theorems
stating that there will be a linear number of comparisons with 
high probability or in expectation (i.e., everything except for 
zeta with $s < 2$).  
We include extra plots of the zeta distribution tests 
with the $s=1.1$ data and 
the $s = 1.1,1.5$ data removed to better see the other data sets.

\begin{figure*}[t]
\centering
\begin{subfigure}{.45\textwidth}
\centering
\includegraphics[scale=0.35]{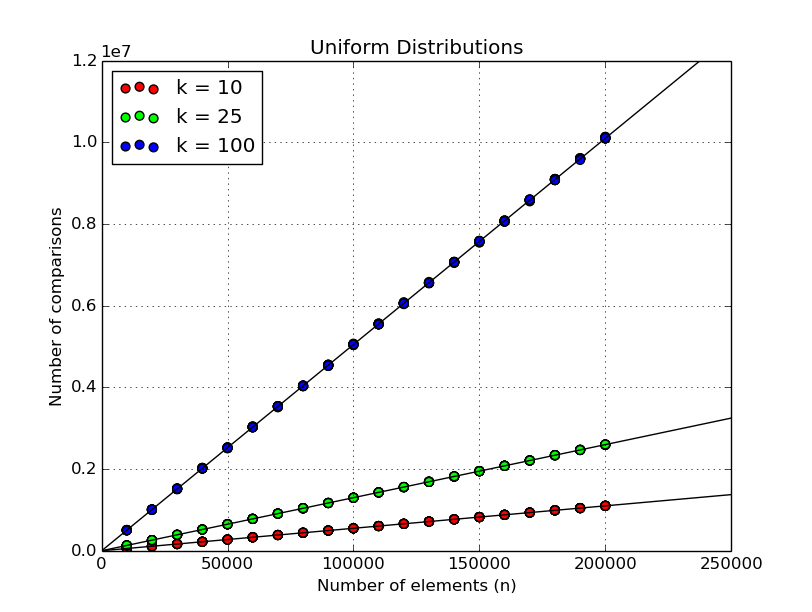}
\end{subfigure}
\begin{subfigure}{.45\textwidth}
\centering
\includegraphics[scale=0.35]{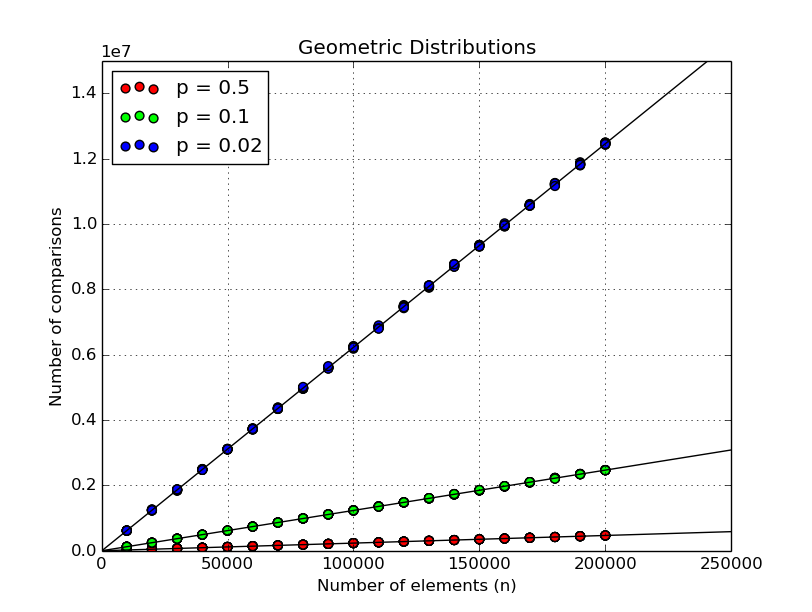}
\end{subfigure}
\begin{subfigure}{.45\textwidth}
\centering
\includegraphics[scale=0.35]{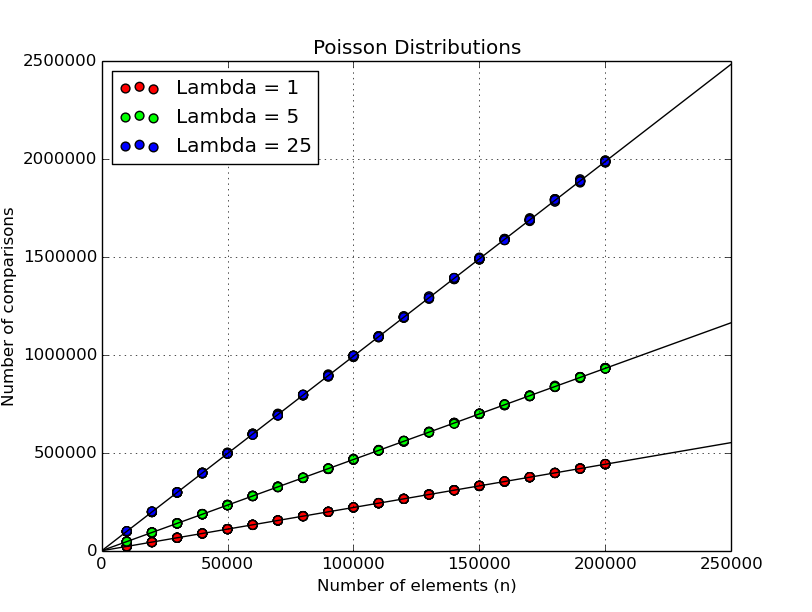}
\end{subfigure}
\begin{subfigure}{.45\textwidth}
\centering
\includegraphics[scale=0.35]{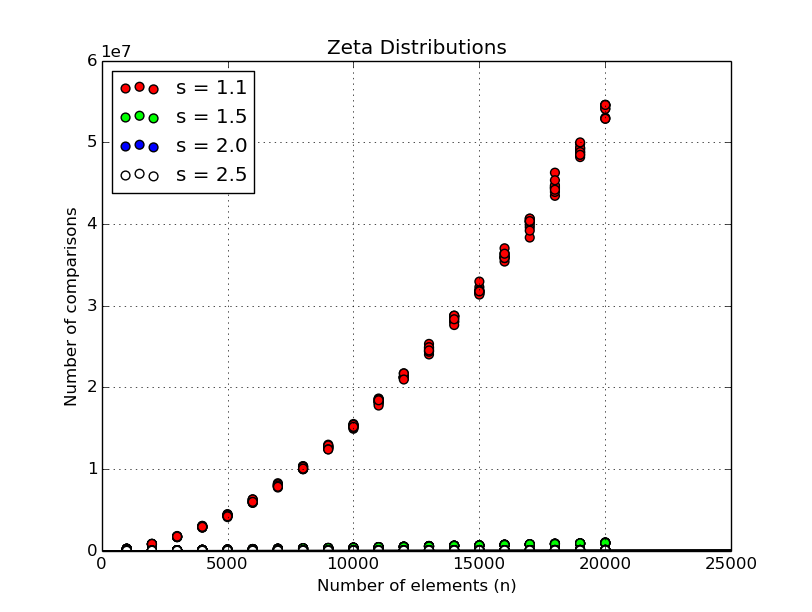}
\end{subfigure}
\begin{subfigure}{.45\textwidth}
\centering
\includegraphics[scale=0.35]{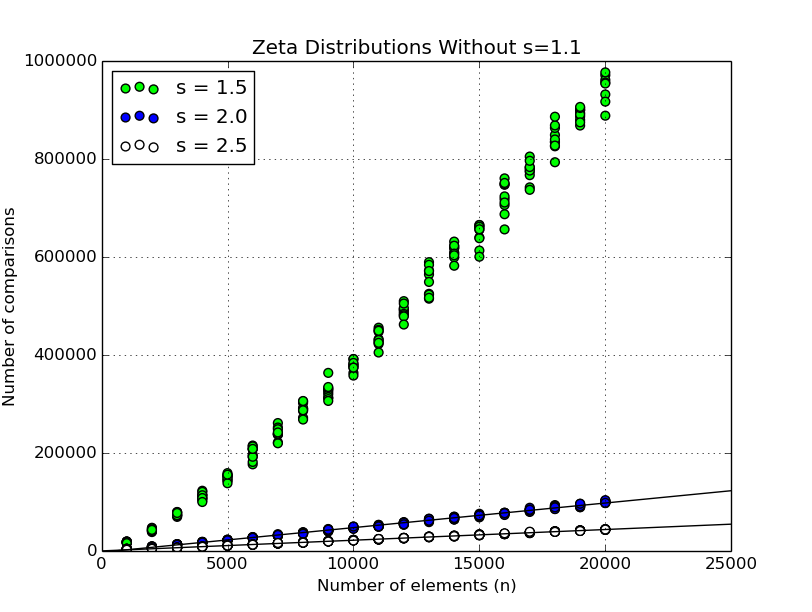}
\end{subfigure}
\begin{subfigure}{.45\textwidth}
\centering
\includegraphics[scale=0.35]{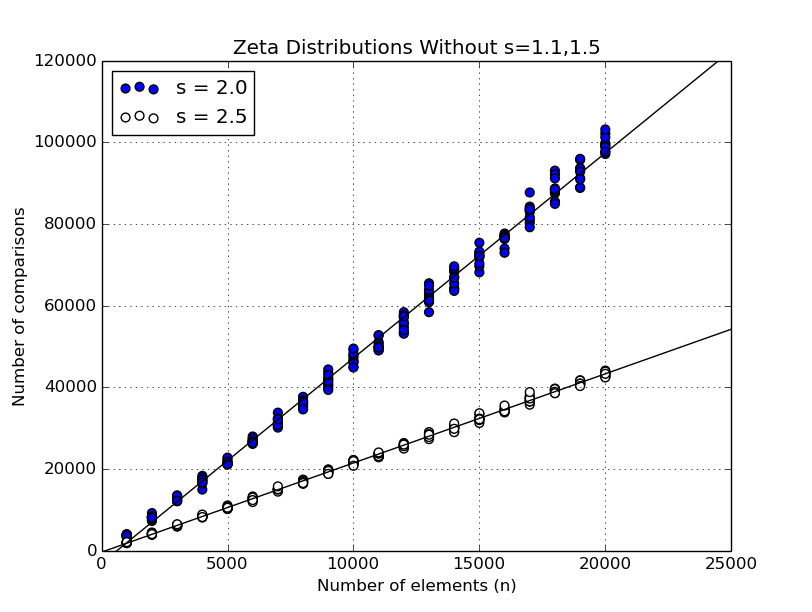}
\end{subfigure}
\caption{The results of the experiments are plotted and best fit lines are placed when we have a linear number of comparisons with high probability or in expectation.}
\label{fig:exper-results}
\end{figure*}

We can see from the data that the number of comparisons for 
the uniform, geometric, and Poisson distributions are so tightly concentrated 
around the best fit line that only one data point is visible.  
Contrariwise, the data points for the zeta distributions do not cluster nearly as nicely.  
Even when we have linear expected comparisons with $s=2$, the data points 
vary by as much as $10\%$.  

\section{Conclusion}
In this paper we have 
studied the equivalence class sorting problem, 
from a parallel perspective,
giving several new algorithms,
as well as new lower bounds and distribution-based analysis.
We leave as open problems the following interesting questions:
\begin{itemize}
\item
Is it possible to find all equivalance
classes in the ER version of the ECS problem
in $O(k)$ parallel rounds, for $k\ge 3$, where $k$ is the number
of equivalence classes?
Note that the answer is ``yes'' for $k=2$, as it follows from previous
results for the parallel fault diagnosis
problem~\cite{Beigel:1989,Beigel:1993,b492587}.
\item
Is it possible to bound the number of comparisons away from 
$O(n^2)$ for the zeta distribution when $s<2$ even just in expectation?
\item
Is it possible to prove a high-probability concentration bound for
the zeta distribution, similar to the concentration bounds we proved
for other distributions?
\end{itemize}

\subsection*{Acknowledgments}
This research was supported in part by
the National Science Foundation under grant 1228639
and a gift from the 3M Corporation.
We would like to thank David Eppstein and Ian Munro
for several helpful discussions
concerning the topics of this paper.

{\raggedright
\bibliographystyle{abbrv}
\bibliography{refs}
}

\end{document}